\newtheorem{definition}{Definition}
\newcommand{\rronly}[1]{#1} 
\newcommand{\pponly}[1]{}
\newcommand{\true}{\mathit{true}}
\newcommand{\false}{\mathit{false}}
\renewcommand{\vec}[1]{{\boldsymbol{#1}}}
\newcommand{\overtilde}[1]{{\overset{\sim}{#1}}}
\newcommand{\eqdef}{\equiv}
\newcommand{\templ}{\mathcal{T}}
\newcommand{\x}{{x}} 
\newcommand{\xv}{{\chi}} 
\newcommand{\vx}{{\vec{\x}}}
\newcommand{\vxp}{{\vec{\x}'}}
\newcommand{\vxpp}{{\vec{\x}''}}
\newcommand{\vxv}{{\vec{\xv}}}
\newcommand{\vxpv}{{\vec{\xv}'}}
\newcommand{\vxin}{{\vx^{in}}}
\newcommand{\vxinv}{{\vxv^{in}}}
\newcommand{\vxout}{{\vx^{out}}}
\newcommand{\vxargin}{{\vx^{p\_in}}}
\newcommand{\vxargout}{{\vx^{p\_out}}}
\newcommand{\rankparam}{\ell}
\newcommand{\rankparamvec}{L}
\newcommand{\rankparamvecv}{\Lambda}
\newcommand{\invparam}{d}
\newcommand{\invparamv}{\delta}
\newcommand{\precondparam}{e}
\newcommand{\rank}{\mathit{R}}
\newcommand{\rrank}{RR}
\newcommand{\lexrank}{LR}
\newcommand{\rranktempl}{\mathcal{RR}}
\newcommand{\lexranktempl}{\mathcal{LR}}
\newcommand{\precondtempl}{\mathcal{P}}
\newcommand{\trans}{\mathit{Trans}}
\newcommand{\inv}{\mathit{Inv}}
\newcommand{\invs}{\mathit{Invs}}
\newcommand{\init}{\mathit{Init}}
\newcommand{\fout}{\mathit{Out}}
\newcommand{\precond}{\mathit{Precond}}
\newcommand{\preconditions}{\mathit{Preconds}}
\newcommand{\summary}{\mathit{Sum}}
\newcommand{\summaries}{\mathit{Sums}}
\newcommand{\callctx}{\mathit{CallCtx}}
\newcommand{\assertions}{\mathit{Assertions}}
\newcommand{\assumptions}{\mathit{Assumptions}}
\newcommand{\termconditions}{\mathit{termConds}}
\newcommand{\comptermarg}{\mathit{compTermArg}}
\newcommand{\compprecondterm}{\mathit{compPrecondTerm}}
\newcommand{\compinv}{\mathit{compInv}}
\newcommand{\compinvsummary}{\mathit{compInvSum}}
\newcommand{\compcallctx}{\mathit{compCallCtx}}
\newcommand{\compprecond}{\mathit{compNecPrecond}}
\newcommand{\analyzeforward}{\mathit{analyzeForward}}
\newcommand{\analyzebackward}{\mathit{analyzeBackward}}
\newcommand{\ssadef}{\doteq}
\renewcommand{\function}{{procedure}}
\newcommand{\Function}{{Procedure}}
\newcommand*\rot{\rotatebox{90}}
\newcommand{\toolname}{2LS\xspace}
\newtheorem{theorem}{Theorem}
\newtheorem{lemma}{Lemma}
\newtheorem{proposition}{Proposition}
\newtheorem{remark}{Remark}
\renewcommand{\paragraph}[1]{\smallskip\noindent\textbf{#1}\quad}
\begin{document}

\title{Synthesising Interprocedural Bit-Precise Termination Proofs\rronly{ (extended version)}\thanks{The research leading to these results has received funding
    from the ARTEMIS Joint Undertaking under grant
    agreement number 295311 \href{http://vetess.eu/}{``VeTeSS''},
    and ERC project~280053 (CPROVER).}}

\author{
\IEEEauthorblockN{Hong-Yi Chen,  Cristina David,  Daniel Kroening, Peter Schrammel and Bj\"orn Wachter}
\IEEEauthorblockA{
Department of Computer Science, University of Oxford, first.lastname@cs.ox.ac.uk}
}

\maketitle

\begin{abstract}
Proving program termination is key to guaranteeing
absence of undesirable behaviour, such as hanging programs
and even security vulnerabilities such as denial-of-service attacks.
To make termination checks scale to large systems,
interprocedural termination analysis seems essential,
which is a largely unexplored area of research in termination analysis,
where most effort has focussed on difficult single-procedure problems.
We present a modular termination analysis for C programs
using template-based interprocedural summarisation.
Our analysis combines a context-sensitive, over-approximating forward
analysis with the inference of under-approximating preconditions
for termination.
Bit-precise termination arguments are synthesised over lexicographic
linear ranking function templates.
Our experimental results show that our tool \toolname outperforms
state-of-the-art alternatives, and demonstrate the clear
advantage of interprocedural reasoning over monolithic analysis in
terms of efficiency, while retaining comparable precision.
\end{abstract}



\section{Introduction}

Termination bugs can compromise safety-critical software systems by making them irresponsive,
e.g., termination bugs can be exploited in denial-of-service attacks~\cite{CVE}.
Termination guarantees are therefore instrumental for software reliability.
Termination provers,
static analysis tools that aim to construct a termination proof for a given input program,
have made tremendous progress.
They enable automatic proofs for complex loops that
may require linear lexicographic (e.g.~\cite{BG13,LH14}) or non-linear
termination arguments (e.g.~\cite{BMS05b}) in a completely automatic way.
However, there remain major practical challenges in analysing real-world code.

First of all, as observed by~\cite{FKS12}, most approaches in the
literature are specialised to linear arithmetic over unbounded
mathematical integers.  Although, unbounded arithmetic may reflect the
intuitively-expected program behaviour, the program actually executes
over bounded machine integers.  The semantics of C allows unsigned
integers to wrap around when they over/underflow.  Hence, arithmetic
on $k$-bit-wide unsigned integers must be performed modulo-$2^k$.
According to the C standards, over/underflows of signed integers are
undefined behaviour, but practically also wrap around on most
architectures.  Thus, accurate termination analysis requires a
\mbox{\emph{bit-precise}} analysis of program semantics.  Tools must
be configurable with architectural specifications such as the width of
data types and endianness.
The following examples illustrate that termination behaviour on
machine integers can be completely different than on mathematical
integers.  For example, the following code:
\begin{lstlisting}[numbers=none]
void foo1(unsigned n) { for(unsigned x=0; x<=n; x++); }
\end{lstlisting}
does terminate with mathematical integers, but does \emph{not} terminate
with machine integers if \texttt{n} equals the largest unsigned integer.
On the other hand, the following code:
\begin{lstlisting}[numbers=none]
void foo2(unsigned x) { while(x>=10) x++; }
\end{lstlisting}
does not terminate with mathematical integers, but terminates with
machine integers because unsigned machine integers wrap around.

A second challenge is to make termination analysis scale to larger programs.
The yearly Software Verification Competition (SV-COMP)~\cite{DBLP:conf/tacas/Beyer15}
includes a division in termination analysis, which reflects a representative picture of the state-of-the-art. 
The SV-COMP'15 termination benchmarks contain challenging termination problems on smaller programs
with at most 453 instructions \hbox{(average 53)}, contained at most 7 functions \hbox{(average 3)}, and 4 loops \hbox{(average 1)}.

In this paper, we present a technique that we have successfully
run on programs that are one magnitude larger, containing up to 5000 instructions.
Larger instances require different algorithmic
techniques to scale, e.g., modular interprocedural analysis rather than monolithic analysis.
This poses several conceptual and practical challenges that do not
arise in monolithic termination analysers.
For example, when proving termination of a program, a possible approach is to try
to prove that all {\function}s in the program terminate \emph{universally},
i.e., in any possible calling context.
However, this criterion is too optimistic, as termination of individual
{\function}s often depends on the calling context,
i.e., {\function}s terminate \emph{conditionally} only in specific calling contexts.

Hence, an interprocedural analysis strategy is to verify universal program termination
in a top-down manner by proving termination of each {\function}
relative to its \emph{calling contexts}, and propagating upwards which
calling contexts guarantee termination of the {\function}.
It is too difficult to determine these contexts precisely; analysers thus compute preconditions for
termination.  A \emph{sufficient precondition} identifies those pre-states in
which the {\function} will definitely terminate, and is thus suitable for
proving termination.  By contrast, a \emph{necessary precondition}
identifies the pre-states in which the {\function} may terminate.  Its
negation are those states in which the {\function} will not terminate, which is
useful for proving nontermination.  

In this paper we focus on the computation of sufficient preconditions. 
Preconditions enable information reuse, and thus scalability, as it is
frequently possible to avoid repeated analysis of parts of the code base, e.g. 
libraries whose {\function}s are called multiple times or did not undergo
modifications between successive analysis runs.

\paragraph{Contributions:}
\begin{compactenum}
\item We propose an algorithm for \emph{interprocedural termination analysis}.
  The approach is based on a template-based static analysis using SAT
  solving.  It combines context-sensitive, summary-based
  interprocedural analysis with the inference of preconditions for
  termination based on template abstractions.
  We focus on non-recursive programs, which cover a large portion of
  software written, especially in domains such as embedded systems.
\item We provide an implementation of the approach in \toolname, a
  static analysis tool for C programs.  Our instantiation of the
  algorithm uses template polyhedra and lexicographic, linear ranking
  functions templates.  The analysis is bit-precise and purely relies
  on SAT-solving techniques.

\item We report the results of an experimental evaluation on 597
  procedural SV-COMP benchmarks with in total 1.6 million lines of code
  that demonstrates the scalability and applicability of the approach
  to programs with thousands of lines of code.
\end{compactenum}

%

\section{Preliminaries}\label{sec:prelim}

In this section, we introduce basic notions of
interprocedural and termination analysis.

\paragraph{Program model and notation.}
We assume that programs are given in terms of acyclic\footnote{
We consider non-recursive programs with multiple procedures.} 
call graphs,
where individual {\function}s $f$ are given in terms of
symbolic input/output transition systems.
Formally, the input/output transition system of a {\function} $f$ 
is a triple $(\init_f,\trans_f,\fout_f)$, where $\trans_f(\vx,\vxp)$ is the
transition relation; the input relation $\init_f(\vxin, \vx)$ defines the
initial states of the transition system and
relates it to the inputs $\vxin$; the output relation
$\fout_f(\vx,\vxout)$ connects the transition system to the outputs
$\vxout$ of the {\function}.
Inputs are {\function} parameters, global variables, and memory
objects that are read by $f$.  Outputs are return values, and
potential side effects such as global variables and memory objects
written by $f$. Internal states $\vx$ are commonly the values of variables at
the loop heads in $f$.

These relations are given \emph{as first-order logic formulae}
resulting from the logical encoding of the program semantics.
Fig.~\ref{fig:encoding1} shows the encoding of the two {\function}s
in Fig.~\ref{fig:example1} into such formulae.%
\footnote{\emph{c?a:b} is the conditional operator, which returns $a$ if $c$ evaluates to $\true$, and $b$ otherwise.}
The inputs $\vxin$ of $f$ are $(z)$ and the outputs $\vxout$ consist
of the return value denoted $(r_f)$.  The transition relation of $h$
encodes the loop over the internal state variables $(x,y)^T$.  
We may need to introduce Boolean variables $g$ to model the control flow,
as shown in $f$.
Multiple and nested loops can be similarly encoded in $\trans$.

Note that we view these formulae as predicates, e.g.\ $\trans(\vx,
\vxp)$, with given parameters $\vx, \vxp$, and mean the substitution
$\trans[\vec{a}/\vec{x},\vec{b}/\vec{x}']$ when we write
$\trans(\vec{a}, \vec{b})$.
Moreover, we write $\vx$ and $x$ with the understanding that the
former is a vector, whereas the latter is a scalar.

Each call to a {\function} $h$ at call site $i$ in a {\function}~$f$ is
modeled by a \emph{placeholder predicate}
\hbox{$h_i(\vxargin_i,\vxargout_i)$} occurring in the formula $\trans_f$ for $f$.
The placeholder predicate ranges over intermediate variables
representing its actual input and output parameters $\vxargin_i$ and
$\vxargout_i$, respectively.  Placeholder predicates 
evaluate to $\true$, which corresponds to havocking {\function} calls.
In {\function} $f$ in Fig.~\ref{fig:encoding1}, the placeholder for the
{\function} call to $h$ is $h_0((z),(w_1))$ with the actual input and
output parameters $z$ and $w_1$, respectively.

A full description of the program encoding 
is given in \rronly{Appendix~\ref{sec:appendix}}

\paragraph{Basic concepts.}
Moving on to interprocedural analysis, we introduce formal notation 
for the basic concepts below: 
\begin{definition}[Invariants, Summaries, Calling Contexts] \label{def:inv}
For a {\function} given by 
\hbox{$(\init,\trans, \linebreak \fout)$}
we define: 
\begin{itemize}
  \item An
  \emph{invariant} is a predicate $\inv$ such that:
  \[
  \begin{array}{rlr}
  \forall \vxin, \vx, \vxp: & \init(\vxin,\vx) \Longrightarrow \inv(\vx) 
  \\
  \wedge &\inv(\vx)\wedge\trans(\vx,\vxp)\Longrightarrow \inv(\vxp) 
  \end{array}
  \]
  \item Given an invariant $\inv$, a \emph{summary} is a
    predicate $\summary$ such that:
  \[
  \begin{array}{rl}
  \forall \vxin,\vx, \vxp,\vxout:
  & \init(\vxin,\vx) \wedge \inv(\vxp) \wedge \fout(\vxp,\vxout) \\
  & \Longrightarrow \summary(\vxin,\vxout)
  \end{array}
  \]
  
  \item
  Given an invariant $\inv$, the \emph{calling context} for a {\function} call $h$ at call site $i$ in the given {\function} is a predicate $\callctx_{h_i}$ such that 
  \[
  \begin{array}{l}
  \forall \vx,\vxp,\vxargin_i,\vxargout_i: \\
  \quad \inv(\vx) \wedge \trans(\vx,\vxp) \Longrightarrow \callctx_{h_i}(\vxargin_i,\vxargout_i)
  \end{array}
  \]
\end{itemize}
\end{definition}
These concepts have the following roles: 
Invariants abstract the behaviour of loops. 
Summaries abstract the behaviour of called {\function}s;
they are used to strengthen the placeholder predicates.
Calling contexts abstract the caller's behaviour w.r.t.\ the
{\function} being called. When analysing the callee, the calling
contexts are used to constrain its inputs and outputs.
In Sec.~\ref{sec:overview} we will illustrate these notions on the
program in Fig.~\ref{fig:example1}.

\begin{figure}[t]
\begin{tabular}{l@{\hspace{3em}}l}
\begin{lstlisting}
unsigned f(unsigned z) {
  unsigned w=0;
  if(z>0) w=h(z);
  return w;
}
\end{lstlisting}
&
\begin{lstlisting}
unsigned h(unsigned y) {
  unsigned x;
  for(x=0; x<10; x+=y);
  return x;
}
\end{lstlisting}
\end{tabular}
\caption{\label{fig:example1}
Example.
}
\vspace{-.25cm}
\end{figure}

\begin{figure}[t]
\footnotesize
\begin{tabular}{@{\hspace*{-0.5em}}r@{\,}l}
$\init_f((z),(w,z,g)^T) \equiv$ & $(w_0{=}0 \wedge z'{=}z \wedge g)$ \\
$\trans_f((w,z,g)^T,(w',z',g')^T) \equiv$ & $(g \wedge h_0((z),(w_1)) \wedge$\\
&                          $w {=} (z{>}0 ? w_1{:}w_0) \wedge \neg g')$ \\
$\fout_f((w,z,g)^T,(r_f)) \equiv$ & $(r_f{=}w)$ \\[0.5ex]
\hline
\\[-1.5ex]
$\init_h((y),(x,y')^T) \equiv$ & $(x{=}0 \wedge y'{=}y)$ \\
$\trans_h((x,y)^T,(x',y')^T) \equiv$ & $(x'{=}x{+}y \wedge x{<}10 \wedge y{=}y')$  \\
$\fout_h((x,y)^T,(r_h)) \equiv$ & $(r_h{=}x \wedge \neg(x{<}10))$
\\
\end{tabular}
\caption{\label{fig:encoding1}
Encoding of Example~\ref{fig:example1}.
}
\vspace{-.25cm}
\end{figure}

Since we want to reason about termination,
we need the notions of ranking functions and preconditions
for termination.

\begin{definition}[Ranking function] \label{def:ranking_function}
A \emph{ranking function} for a {\function} $(\init,\trans,\fout)$ with
invariant $\inv$ is a function $r$
from the set of program states 
to a well-founded domain 
such that
$
\forall \vx,\vxp:
\inv(\vx) \wedge \trans(\vx,\vxp)
\Longrightarrow r(\vx) > r(\vxp).
$
\end{definition}

We denote by $\rrank(\vx,\vxp)$ a set of constraints that
guarantee that $r$ is a ranking function.  
The existence of a ranking function for a {\function} guarantees its
\emph{universal} termination.

The weakest termination precondition for a {\function} describes the
inputs for which it terminates.
If it is $\true$, the {\function} terminates universally; if it is
$\false$, then it does not terminate for any input.
Since the weakest precondition is intractable to compute or even
uncomputable, we under-approximate the precondition.  A
\emph{sufficient precondition} for termination guarantees that the program
terminates for all $\vxin$ that satisfy it.

\begin{definition}[Precondition for termination]\label{def:precond}
Given a {\function} $(\init,\trans,\fout)$, a
sufficient \emph{precondition for termination} is a predicate
$\precond$ such that
\[ \begin{array}{rl}
\multicolumn{2}{l}{\exists \rrank,\inv: \forall \vxin,\vx,\vxp:} \\
& \precond(\vxin) \wedge \init(\vxin,\vx) \Longrightarrow \inv(\vx) \\
\wedge & \inv(\vx) \wedge \trans(\vx,\vxp) \Longrightarrow \inv(\vxp)
 \wedge \rrank(\vx,\vxp)
\end{array} \]
\end{definition}
Note that $\false$ is always a trivial model for $\precond$, but not a
very useful one.

\section{Overview of the Approach}\label{sec:overview}

In this section, we introduce the architecture of 
our interprocedural termination analysis.
Our analysis combines, in a non-trivial synergistic way,
the inference of invariants, summaries, calling contexts,
termination arguments, and preconditions,
which have a concise characterisation in second-order
logic (see Definitions~\ref{def:inv}, and~\ref{def:precond}).
At the lowest level our approach relies on a solver backend for second-order problems,
which is described in Sec.~\ref{sec:sa}.

To see how the different analysis components fit together,
we now go through the pseudo-code of our termination analyser (Algorithm~\ref{alg:interproc}). Function $\mathit{analyze}$ is given the
entry {\function} $f_{\mathit{entry}}$ of the program as argument and
proceeds in two analysis phases.

Phase one is an \emph{over-approximate} forward analysis,
given in subroutine $\analyzeforward$, which 
recursively descends into the call graph from the entry point $f_{\mathit{entry}}$.
Subroutine $\analyzeforward$ infers for each {\function} call in $f$ an
over-approximating calling context $\callctx^o$,
using {\function} summaries and other previously-computed information.  
Before analyzing a callee, 
the analysis checks if the callee has 
already been analysed and, whether the stored summary 
can be re-used, i.e., if it is 
compatible with the new calling context $\callctx^o$.
Finally, once summaries for all callees are available, the analysis infers loop
invariants and a summary for $f$ itself, which are stored for later re-use 
by means of a join operator.

The second phase is an \emph{under-approximate} backward analysis,
subroutine $\analyzebackward$, which infers termination preconditions.  
Again, we recursively descend into the call graph.
Analogous to the forward analysis, we infer for each {\function} call
in $f$ an under-approximating calling context $\callctx^u$ (using
under-approximate summaries, as described in
Sec.~\ref{sec:interproc}), and recurses only if necessary
(Line~\ref{line:bw_recurse}).
Finally, we compute the under-approximating precondition for termination
(Line~\ref{line:bw_precondterm}).
This precondition is inferred w.r.t.~the termination conditions that
have been collected: the backward calling context
(Line~\ref{line:bw_cond1}), the preconditions for termination of the
callees (Line~\ref{line:bw_cond2}), and the termination arguments for
$f$ itself (see Sec.~\ref{sec:interproc}).
Note that superscripts $o$ and $u$
in predicate symbols indicate over- and underapproximation, respectively.

\SetAlFnt{\small}
\begin{algorithm}[t]
\KwGlobal $\summaries^o,\invs^o,\preconditions^u$\;
\Func{$\analyzeforward(f,\callctx^o_f)$}{
  \ForEach {{\function} call $h$ in $f$}
  {
    $\callctx^o_h = \compcallctx^o(f,\callctx^o_f,h)$\;
    \If{$\mathit{needToReAnalyze}^o(h,\callctx^o_h)$\label{line:needtoreanalze1}}
    {
     $\analyzeforward(h,\callctx^o_h)$\;
    }
  }
  $\mathit{join}^o((\summaries^o[f],\invs^o[f]),\compinvsummary^o(f,\callctx^o_f))$\label{line:fw_invsum}
}
\Func{$\analyzebackward(f,\callctx^u_f)$}{
  $\termconditions = \callctx^u_f$\label{line:bw_cond1}\;
  \ForEach {{\function} call $h$ in $f$}
  {
    $\callctx^u_h = \compcallctx^u(f,\callctx^u_f,h)$\;
    \If{$\mathit{needToReAnalyze}^u(h,\callctx^u_h)$\label{line:bw_recurse}}
    {
      $\analyzebackward(h,\callctx^u_h)$\;
    }
    $\termconditions \gets \termconditions \wedge \preconditions^u[h]$\label{line:bw_cond2}\;
  }
  $\begin{array}{@{}l@{}l}\mathit{join}^u(&\preconditions^u[f],\\
      &\compprecondterm(f,\invs^o[f],\termconditions);
      \end{array}$\label{line:bw_precondterm}
}
\Func{$\mathit{analyze}(f_\mathit{entry})$}{
  $\analyzeforward(f_\mathit{entry},\true)$\;
  $\analyzebackward(f_\mathit{entry},\true)$\;
  \Return $\preconditions^u[f_\mathit{entry}]$\;
}
\caption{\label{alg:interproc}
$\mathit{analyze}$}
\end{algorithm}
%

\paragraph{Challenges.} Our algorithm uses 
over- and under-approximation in a novel, systematic way. 
In particular, we address the challenging problem of finding meaningful preconditions:
\begin{itemize}
\item  
The precondition Definition~\ref{def:precond} admits the trivial solution
$\false$ for $\precond$. How do we find a good candidate?
To this end, we ``bootstrap'' the process with a
candidate precondition: a single value of $\vxin$, for which we
compute a termination argument.
The key observation is that the resulting termination argument is typically
more general, i.e., it shows termination for many further entry states.
The more general precondition is then computed by precondition inference
w.r.t.~the termination argument.
\item 
A second challenge is to compute under-approximations. 
Obviously, the predicates in
the definitions in Sec.~\ref{sec:prelim} can be over-approximated
by using abstract domains such as intervals.
However, there are only few methods for under-approximating analysis.  In
this work, we use a method similar to \cite{CGL+08} to obtain
under-approximating preconditions w.r.t.~property~$p$: we infer an
over-approximating precondition w.r.t.~$\neg p$ and negate the result. 
In our case, $p$ is the termination condition $\termconditions$.
\end{itemize}

\paragraph{Example.}
We illustrate the algorithm on the simple example given as
Fig.~\ref{fig:example1} with the encoding in Fig.~\ref{fig:encoding1}.
\texttt{f} calls a {\function} \texttt{h}.  {\Function} \texttt{h}
terminates if and only if its argument \texttt{y} is non-zero, i.e.,
{\function} \texttt{f} only terminates conditionally.  The call of
\texttt{h} is guarded by the condition \texttt{z>0}, which guarantees
universal termination of {\function} \texttt{f}.

Let us assume that unsigned integers are 32 bits wide, and we use an
interval abstract domain for invariant, summary and precondition
inference, but the abstract domain with the elements
$\{\true,\false\}$ for computing calling contexts, i.e., we can prove
that calls are unreachable. We use $M:=2^{32}{-}1$.

Our algorithm proceeds as follows.  The first phase is
$\analyzeforward$, which starts from the entry {\function} \texttt{f}.
By descending into the call graph, we must compute an
over-approximating calling context $\callctx^o_{h}$ for {\function}
\texttt{h} for which no calling context has been computed before.
This calling context is $\true$.
Hence, we recursively analyse \texttt{h}.  Given that \texttt{h} does
not contain any {\function} calls, we compute the over-approximating
summary $\summary^o_h=(0{\leq} y{\leq} M \wedge 0 {\leq} r_h{\leq} M)$
and invariant $\inv^o_h=(0{\leq} x{\leq} M \wedge 0{\leq} y{\leq} M)$.
Now, this information can be used in order to compute
$\summary^o_f=(0{\leq} z{\leq} M \wedge 0 {\leq} r_f{\leq} M)$ and
invariant $\inv^o_f=\true$ for the entry {\function} \texttt{f}.

The backwards analysis starts again from the entry {\function} \texttt{f}. 
It computes an under-approximating calling context $\callctx^u_{h}$ for
{\function} \texttt{h}, which is $\true$, before descending into the call
graph.
It then computes an under-approximating precondition for termination
$\precond^u_{h} = (1{\leq} y{\leq} M)$ or, more precisely, an
under-approximating summary whose projection onto the input variables of
\texttt{h} is the precondition $\precond^u_{h}$.
By applying this summary at the call site of \texttt{h} in
\texttt{f}, we can now compute the precondition for termination
$\precond^u_{f} = (0{\leq} z{\leq} M)$ of \texttt{f}, which proves universal
termination of~\texttt{f}.

We illustrate the effect of the choice of the abstract domain on the
analysis of the example program.  Assume we replace the $\{\true,\false\}$
domain by the interval domain.  In this case, $\analyzeforward$ computes
$\callctx^o_{h}=(1{\leq} z{\leq} M\wedge 0{\leq} w_1{\leq} M)$.
The calling context is computed over the actual parameters $z$ and
$w_1$.  It is renamed to the formal parameters $y$ and $r_h$ (the
return value) when $\callctx^o_{h}$ is used for constraining the
pre/postconditions in the analysis of \texttt{h}.
Subsequently, $\analyzebackward$ computes the precondition
for termination of \texttt{h} using the union of all calling contexts
in the program. Since \texttt{h} terminates unconditionally
in these calling contexts, we trivially obtain $\precond^u_{h} =
(1{\leq} y{\leq} M)$, which in turn proves universal
termination of~\texttt{f}.

\section{Interprocedural Termination Analysis} \label{sec:interproc}

We can view Alg.~\ref{alg:interproc} 
as solving a series of formulae in second-order
predicate logic with existentially quantified predicates,
for which we are seeking satisfiability witnesses.%
\footnote{
To be precise, we are not only looking for witness predicates but
(good approximations of) weakest or strongest predicates. 
Finding such biased witnesses is a feature of our synthesis algorithms.}
In this section,  
we state the constraints we solve, including 
all the side constraints arising from the interprocedural analysis.
Note that this is not a formalisation exercise, but these are
precisely the formulae solved by our synthesis backend,
which is described in Section~\ref{sec:sa}.

\subsection{Universal Termination}\label{sec:univterm}

\SetAlFnt{\small}
\begin{algorithm}[t]
\KwGlobal $\summaries^o,\invs^o,\mathit{termStatus}$\;
\Func{$\analyzeforward(f,\callctx^o_f)$}{
  \ForEach {{\function} call $h$ in $f$}
  {
    $\callctx^o_h = \underline{\compcallctx^o}(f,\callctx^o_f,h)$\;
    \If{$\mathit{needToReAnalyze}^o(h,\callctx^o_h)$}
    {
     $\analyzeforward(h,\callctx^o_h)$\;
    }
  }
  $\mathit{join}^o((\summaries^o[f],\invs^o[f]),\underline{\compinvsummary^o}(f,\callctx^o_f))$
}
\Func{$\analyzebackward'(f)$}{
    $\mathit{termStatus}[f] = \underline{\comptermarg}(f)$\;
  \ForEach {{\function} call $h$ in $f$}
  {
   \If{$\mathit{needToReAnalyze}^u(h,\callctx^o_h)$} 
    {
      $\analyzebackward(h)$\;
      $\mathit{join}(\mathit{termStatus}[f],\mathit{termStatus}[h])$\label{line:join2}\;
    }
  }
}
\Func{$\mathit{analyze}(f_\mathit{entry})$}{
  $\analyzeforward(f_\mathit{entry},\true)$\;
  $\analyzebackward'(f_\mathit{entry})$\;
  \Return $\mathit{termStatus}[f_\mathit{entry}]$\;
}
\caption{\label{alg:interprocuniversal}
$\mathit{analyze}$ for universal termination}
\end{algorithm}
%

For didactical purposes, we start with a simplification of
Algorithm~\ref{alg:interproc} that is able to show universal
termination (see Algorithm~\ref{alg:interprocuniversal}).  This
variant reduces the backward analysis to a call to $\comptermarg$ and
propagating back the qualitative result obtained: \emph{terminating,
  potentially non-terminating}, or \emph{non-terminating}.

This section states
the constraints that are solved to
compute the outcome of the functions underlined in
Algorithm~\ref{alg:interprocuniversal} and establish its soundness:
\begin{compactitem}
\item $\compcallctx^o$ (Def.~\ref{prop:callctxo})
\item $\compinvsummary^o$ (Def.~\ref{prop:summaryo})
\item $\comptermarg$ (Lemma~\ref{prop:term1})\\
\end{compactitem}

\begin{definition}[$\compcallctx^o$]\label{prop:callctxo} 
  A forward calling context $\callctx^o_{h_i}$
  for $h_i$ in {\function} $f$ in calling context $\callctx^o_f$ is
  a satisfiability witness of the following formula:
$$
\begin{array}{r@{}r@{}l}
  \multicolumn{3}{l}{\exists \callctx^o_{h_i}, \inv^o_f: \forall \vxin,\vx,\vxp,\vxout,\vxargin_i,\vxargout_i:} \\
\multicolumn{3}{l}{~~\callctx^o_f(\vxin,\vxout) \wedge \summaries^o_f \rronly{ \wedge \assumptions_f(\vx)} \Longrightarrow} \\
& \big( & \init_f(\vxin,\vx) \Longrightarrow \inv^o_f(\vx) \big) \\
&~~\wedge\big( & \inv^o_f(\vx) \wedge 
\trans_f(\vx,\vxp) \\
&&\Longrightarrow \inv^o_f(\vxp) \wedge (g_{h_i} \Rightarrow\callctx^o_{h_i}(\vxargin_i,\vxargout_i)\big)
\end{array}
$$
\end{definition}
$\begin{array}{@{}rl}
\text{with }\summaries^o_f =
\bigwedge_{\text{calls }h_j\text{ in }f} &  g_{h_j} \Longrightarrow \\
&\summaries^o[h](\vxargin_j,\vxargout_j)
\end{array}
$\\ where $g_{h_j}$ is the guard condition of {\function} call $h_j$
in $f$ capturing the branch conditions from conditionals.
For example, $g_{h_0}$ of the {\function} call to \texttt{h} in \texttt{f} in
Fig.~\ref{fig:example1} is $z>0$. $\summaries^o[h]$ is the currently
available summary for \texttt{h} (cf.\ global variables in Alg.~\ref{alg:interproc}).
\rronly{Assumptions correspond to \texttt{assume()} statements in the code.}  
\begin{lemma}\label{lemma:callctxo}
$\callctx^o_{h_i}$ is over-approximating.
\end{lemma}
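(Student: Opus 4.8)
The plan is to show that any satisfiability witness $\callctx^o_{h_i}$ of the formula in Definition~\ref{prop:callctxo} satisfies the defining condition of a calling context from Definition~\ref{def:inv}, restricted to the reachable states of $f$ in the context $\callctx^o_f$. Recall that $\callctx^o_{h_i}$ being ``over-approximating'' means: whenever $f$ is invoked in a state consistent with $\callctx^o_f$, and the call site $h_i$ is reached with actual parameters $\vxargin_i, \vxargout_i$, then $\callctx^o_{h_i}(\vxargin_i,\vxargout_i)$ holds. Equivalently, the true set of parameter tuples arising at call site $i$ is contained in the set denoted by $\callctx^o_{h_i}$.

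The key steps, in order, are as follows. First, I would fix an arbitrary witness $(\callctx^o_{h_i}, \inv^o_f)$ of the formula; the two implications in the consequent say precisely that $\inv^o_f$ is an invariant of $f$ (in the sense of Definition~\ref{def:inv}) \emph{relative} to the hypotheses $\callctx^o_f$, $\summaries^o_f$, and the assumptions --- i.e.\ it is inductive along $\trans_f$ restricted to executions that respect the summaries of the callees and the \texttt{assume} statements. Second, I would argue that this relativised invariant still over-approximates the genuinely reachable internal states of $f$: the summaries $\summaries^o[h_j]$ are over-approximating by the (inductively maintained) correctness of the summary computation (Def.~\ref{prop:summaryo}), so replacing the havocked placeholder predicates by the conjunction $\summaries^o_f$ only removes behaviours, and the assumptions likewise only restrict; hence every concrete reachable state of $f$ in context $\callctx^o_f$ satisfies $\inv^o_f$. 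Third, from the second conjunct of the consequent, at any reachable state $\vx$ with a transition $\trans_f(\vx,\vxp)$ that reaches call site $i$ (so the guard $g_{h_i}$ holds), we get $\callctx^o_{h_i}(\vxargin_i,\vxargout_i)$; since the actual parameters at the call are determined by $\vx,\vxp$ via $\trans_f$, this is exactly the over-approximation claim.

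The main obstacle I anticipate is the second step: justifying that the relativised invariant $\inv^o_f$ genuinely over-approximates reachable states of $f$, because this rests on the over-approximation guarantees of the \emph{callee} summaries $\summaries^o[h_j]$, which are produced by $\compinvsummary^o$ and are themselves only available after the recursive descent in $\analyzeforward$. Making this rigorous requires an induction over the (acyclic) call graph --- the base case being leaf procedures with no calls, where $\summaries^o_f$ is vacuously $\true$ and the argument is immediate, and the inductive step invoking the over-approximation property of summaries of already-analysed callees. A secondary subtlety is the guard handling: the formula only asserts $g_{h_i} \Rightarrow \callctx^o_{h_i}(\cdots)$, so the over-approximation statement is correctly understood as ``for all parameter tuples that actually occur at call site $i$,'' i.e.\ unreachable call sites impose no constraint, which is consistent with (and in fact strictly weaker than) the unconditional formulation in Definition~\ref{def:inv}; I would note this explicitly so the soundness of the downstream use of $\callctx^o_{h_i}$ in analysing $h$ is not in question.
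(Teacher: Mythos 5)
Your plan follows essentially the same route as the paper's (very brief) proof sketch: establish that the hypotheses of the constraint --- the caller's calling context $\callctx^o_f$ and the callee summaries $\summaries^o_f$ --- are themselves over-approximating, and then conclude by the soundness of the template-based synthesis (Thm.~\ref{thm:synthsound}). You merely make explicit the induction over the acyclic call graph and the guard handling that the paper leaves implicit, which is a faithful elaboration rather than a different argument.
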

\begin{proof}
$\callctx^o_f$ when $f$ is the entry-point {\function} is $\true$; also, the
summaries $\summary^o_{h_j}$ are initially assumed to be $\true$,
i.e.~over-appro\-xi\-ma\-ting.
Hence, given that $\callctx^o_f$ and $\summaries^o_f$ are
over-approxi\-ma\-ting, $\callctx^o_{h_i}$ is over-approximating by 
the soundness of the synthesis (see Thm.~\ref{thm:synthsound} in Sec.~\ref{sec:sa}).
\end{proof}

\paragraph{Example.} 
Let us consider {\function} \texttt{f} in Fig.~\ref{fig:example1}.
\texttt{f} is the entry {\function}, hence we have
$\callctx^o_{f}((z),(r_{f})) = \true$ ($=(0 {\leq} z {\leq} M \wedge 0
       {\leq} r_{f} {\leq} M)$ with $M:=2^{32}{-}1$ when using the
       interval abstract domain for 32 bit integers).  Then, we
       instantiate Def.~\ref{prop:callctxo} (for {\function}
       \texttt{f}) to compute $\callctx^o_{h_0}$. We assume that we
       have not yet computed a summary for \texttt{h}, thus,
       $\summary_h$ is $\true$. Remember that the placeholder $h_0((z),(w_1))$ 
       evaluates to $\true$.
         \rronly{Notably, there are no
         assumptions in the code, meaning that $\assumptions_{f}(z) =
         true$.}

\medskip
\centerline{
$
\begin{array}{@{}r@{}r@{}l}
  \multicolumn{3}{l}{\exists \callctx^o_{h_0}, \inv^o_{f}: \forall z,w_1,w,w',z',g,g',r_{f} :} \\
\multicolumn{3}{l}{~~0 {\leq} z {\leq} M \wedge 0 {\leq} r_f {\leq} M \wedge (z{>}0 \Longrightarrow \true) \rronly{ \wedge \true} \Longrightarrow} \\
& \big( & w{=}0 \wedge z'{=}z \wedge g \Longrightarrow \inv^o_f((w,z,g)^T) \big) \\
&~~\wedge\big( & \inv^o_f((w,z,g)^T) \wedge \\
&&g \wedge h_0((z),(w_1)) \wedge w' {=} (z{>}0 ? w_1{:}w) \wedge z'{=}z\wedge \neg g' \\
&&\Longrightarrow \inv^o_f((w',z',g')^T) \wedge (z{>}0 \Rightarrow\callctx^o_{h_i}((z),(w_1))\big)
\end{array}
$
}

\smallskip
A solution is $\inv^o_{f} = \true$, and
$\callctx^o_{h_0}((z),(w_1)) =(1{\leq} z{\leq} M \wedge 0{\leq} w_1{\leq} M)$. 

\begin{definition}[$\compinvsummary^o$]\label{prop:summaryo} 
A forward summary $\summary^o_f$ and invariants $\inv^o_f$ for {\function} $f$ in calling context $\callctx^o_f$ are 
satisfiability witnesses of the following formula:

\medskip
\centerline{
$
\begin{array}{r@{}r@{}l}
\multicolumn{3}{l}{\exists \summary^o_f, \inv^o_f: \forall \vxin,\vx,\vxp,\vxpp,\vxout:} \\
\multicolumn{3}{l}{~~\callctx^o_f(\vxin,\vxout) \wedge \summaries^o_f \rronly{ \wedge \assumptions_f(\vx)} \Longrightarrow} \\
& \big( & \init_f(\vxin,\vx) \wedge 
\inv^o_f(\vxpp) \wedge
\fout_f(\vxpp,\vxout) \\
&&\Longrightarrow \inv^o_f(\vx) \wedge \summary^o_f(\vxin,\vxout) \big) \\
&~~\wedge\big( & \inv^o_f(\vx) \wedge 
\trans_f(\vx,\vxp) \Longrightarrow \inv^o_f(\vxp)\big)
\end{array}
$}
\end{definition}
\begin{lemma}\label{lemma:summaryo}
$\summary^o_f$ and $\inv^o_f$ are over-approximating.
\end{lemma}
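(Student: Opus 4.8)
The plan is to mirror the argument for Lemma~\ref{lemma:callctxo} and proceed by induction on the acyclic call graph, ordered so that every callee precedes its callers. Fix a {\function} $f$ and take as induction hypothesis that every summary $\summaries^o[h]$ currently stored for a callee $h$ of $f$ is over-approximating (at the leaves of the call graph this is vacuous, since the conjunction $\summaries^o_f$ is then empty and equals $\true$). First I would show that, under this hypothesis, the side constraint $\summaries^o_f = \bigwedge_{h_j}(g_{h_j}\Rightarrow\summaries^o[h_j])$ is itself over-approximating: each conjunct merely strengthens the placeholder predicate $h_j(\vxargin_j,\vxargout_j)$ — which by convention is $\true$, i.e.\ models a fully havocked call — with a sound over-approximation of the real input/output behaviour of $h_j$, hence no genuine execution of $f$ is removed. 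By Lemma~\ref{lemma:callctxo} (or, when $f$ is the entry {\function}, because $\callctx^o_f=\true$) the constraint $\callctx^o_f$ is over-approximating too, and $\assumptions_f$ restricts only executions that the concrete semantics also rules out.

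Next I would check that the formula of Def.~\ref{prop:summaryo} is a sound relaxation of the defining conditions of Def.~\ref{def:inv}. Its consequent is precisely the invariant condition ($\init_f\wedge\inv^o_f\Rightarrow\inv^o_f$ on entry states via $\vxpp$, $\inv^o_f\wedge\trans_f\Rightarrow\inv^o_f$ on steps) together with the summary condition ($\init_f\wedge\inv^o_f\wedge\fout_f\Rightarrow\summary^o_f$), while its antecedent merely adds the over-approximating side constraints discussed above; strengthening the antecedent of a sound constraint with predicates that hold of every concrete execution preserves soundness. Thus any satisfiability witness $(\summary^o_f,\inv^o_f)$ of the formula is a genuine invariant/summary of $f$ in the sense of Def.~\ref{def:inv} with respect to the actual (un-havocked) semantics, so $\inv^o_f$ over-approximates the reachable internal states and $\summary^o_f$ over-approximates the input/output relation. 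Since the synthesis backend returns such a witness and, by Thm.~\ref{thm:synthsound}, a sound one in the chosen abstract domain, over-approximation follows. Finally I would note that the $\mathit{join}^o$ in Algorithm~\ref{alg:interprocuniversal} combines this witness with any previously stored one using the abstract-domain join, which over-approximates both arguments, so the invariant maintained over the course of the analysis is preserved, closing the induction.

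The main obstacle I anticipate is making precise the claim that replacing the placeholder predicates inside $\trans_f$ by $\summaries^o_f$ can only \emph{remove} behaviours and never introduce spurious ones — this requires unfolding the program encoding (cf.\ Fig.~\ref{fig:encoding1}) and appealing to monotonicity of that encoding in the placeholder predicates — together with the fact that the single witness $\inv^o_f$ is shared between the output branch of the formula (where it appears on $\vxpp$) and the transition branch, which must be reconciled with the two separate obligations of Def.~\ref{def:inv}. Everything else is a direct appeal to Thm.~\ref{thm:synthsound} and to the over-approximating property of the join, and is routine.
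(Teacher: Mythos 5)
Your proposal is correct and follows essentially the same route as the paper's proof sketch: establish that the antecedent constraints ($\callctx^o_f$ via Lemma~\ref{lemma:callctxo}, and the callee summaries $\summaries^o_f$) are over-approximating, then conclude by the soundness of the synthesis (Thm.~\ref{thm:synthsound}). Your version is merely more explicit — in particular the induction over the acyclic call graph and the remark about $\mathit{join}^o$ flesh out what the paper compresses into ``the summaries are initially assumed to be $\true$'' — but the decomposition and the key appeal are identical.
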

\begin{proof}
By Lemma~\ref{lemma:callctxo}, $\callctx^o_f$ is over-approximating.
Also, the summaries $\summaries^o_f$ are initially assumed to be $\true$,
i.e. over-approxi\-ma\-ting.
Hence, given that $\callctx^o_f$ and $\summaries^o_f$ are
over-approxi\-ma\-ting, $\summary^o_f$ and $\inv^o_f$ are
over-approximating by the soundness of the synthesis
(Thm.~\ref{thm:synthsound}).
\end{proof}

\paragraph{Example.} 
Let us consider {\function} \texttt{h} in Fig.~\ref{fig:example1}.  We
have computed $\callctx^o_{h_0}((y),(r_h)) = (1{\leq} y{\leq} M \wedge
0{\leq} r_h{\leq}M)$ (with actual parameters renamed to formal ones).
Then, we need obtain witnesses $\inv^o_{h_0}$ and $\summary^o_{h_0}$
to the satifiability of the instantiation of Def.~\ref{prop:summaryo}
(for {\function} \texttt{h}) as given below.  

\medskip
\centerline{
$
\begin{array}{r@{}r@{}l}
\multicolumn{3}{l}{\exists \inv^o_{h_0}, \summary^o_{h_0}: \forall y,x,x',y',x'',y'',r_f:} \\
\multicolumn{3}{l}{~~1 {\leq} y {\leq} M \wedge
0{\leq} r_h{\leq}M \wedge \true \Longrightarrow}\\
& \big( & (x{=}0 \wedge y'{=}y) \wedge \inv^o_{h}((x'',y'')^T) \wedge 
(r_h{=}x'' \wedge \neg(x''{<}10)\\
&&    \Longrightarrow \inv^o_{h}((x,y')^T) \wedge \summary^o_{h}((y),(r_h))\big) \\
& \quad \wedge \big( &
\inv^o_{h}((x,y)^T) \wedge (x'{=}(x{+}y \wedge x{<}10) \wedge y{=}y') \\
&&\Longrightarrow \inv^o_{h}((x',y')^T)\big)
\end{array}
$}

A solution is 
$\inv^o_{h_0} =(0{\leq} x{\leq} M \wedge 1{\leq} y{\leq} M)$ and
$\summary^o_{h_0} =(1{\leq} y{\leq} M \wedge 10{\leq} r_h{\leq} M)$, for instance.

\begin{remark}\label{rem:ipfixpoint}
Since Def.~\ref{prop:callctxo} and Def.~\ref{prop:summaryo}
are interdependent, we can compute them iteratively 
until a fixed point is reached in order to improve the precision of
calling contexts, invariants and summaries.
However, for efficiency reasons, we perform only the first iteration
of this (greatest) fixed point computation.
\end{remark}

\begin{lemma}[$\comptermarg$]\label{prop:term1}
A {\function} $f$
with forward invariants $\inv^o_f$ terminates if
there is a termination argument $\rrank_f$:
$$
\begin{array}{ll@{}l}
\multicolumn{3}{l}{\exists \rrank_f: \forall \vx,\vxp:} \\
&& \inv^o_f(\vx) \wedge \trans_f(\vx,\vxp) \wedge \rronly{\\
&&} \summaries^o_f \wedge \rronly{\assumptions_f(\vx) \wedge} \assertions_f(\vx) \\
&& \Longrightarrow \rrank_f(\vx,\vxp)
\end{array}
$$
\end{lemma}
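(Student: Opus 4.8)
The plan is to reduce Lemma~\ref{prop:term1} to Definition~\ref{def:ranking_function} together with the soundness of the forward analysis established in Lemmas~\ref{lemma:callctxo} and~\ref{lemma:summaryo}. The claim is essentially that the displayed second-order constraint is just the defining condition of a ranking function, relativised to the context information ($\inv^o_f$, $\summaries^o_f$, assumptions and assertions) that the forward phase has already validated. So the proof has two movements: first, show that any witness $\rrank_f$ of the displayed formula yields a genuine ranking function in the sense of Definition~\ref{def:ranking_function} for the concrete transition system of $f$; second, invoke the standard fact that the existence of such a ranking function into a well-founded domain precludes infinite computations, hence $f$ terminates.

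\medskip
\noindent\textbf{Key steps.} First I would fix an arbitrary concrete execution of $f$ in a calling context compatible with $\callctx^o_f$, i.e.\ a sequence of states $\vx_0, \vx_1, \vx_2, \dots$ with $\trans_f(\vx_k,\vx_{k+1})$ holding for all $k$, where the placeholder predicates $h_i$ have been instantiated by actual callee behaviour (recall they are weakened to $\true$, so any concrete callee run satisfies them). Second, I would argue that along such an execution all the antecedent conjuncts of the displayed implication are discharged: $\summaries^o_f$ holds because, by Lemma~\ref{lemma:summaryo}, the stored summaries over-approximate the callees' input/output behaviour and the guards $g_{h_j}$ gate them correctly; $\assumptions_f(\vx)$ holds because executions are by definition restricted to states satisfying \texttt{assume()} statements; $\assertions_f(\vx)$ holds since, for the termination claim, we may assume assertions pass (or treat assertion failures as a separate concern handled by the safety analysis). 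Third, I would show $\inv^o_f(\vx_k)$ holds for every $k$: this is an induction using the invariant conditions implicit in Definition~\ref{prop:summaryo}, whose witnesses Lemma~\ref{lemma:summaryo} guarantees are inductive over-approximations. Fourth, with all antecedents satisfied at each step, the displayed implication gives $\rrank_f(\vx_k,\vx_{k+1})$ for all $k$; by the definition of $\rrank$ (the set of constraints asserting $r$ is a ranking function, per the text after Definition~\ref{def:ranking_function}), this means $r(\vx_k) > r(\vx_{k+1})$ in a well-founded order for all $k$. Fifth, a strictly decreasing chain in a well-founded domain is finite, contradicting the assumed infinite execution; hence no infinite execution exists and $f$ terminates universally (in every context refining $\callctx^o_f$, in particular the $\true$ context for the entry procedure).

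\medskip
\noindent\textbf{Main obstacle.} The delicate point is the interaction between the \emph{over-approximating} context data and the \emph{soundness direction} we need here. For termination we need that \emph{every} concrete run satisfies the antecedent, so that the ranking constraint bites on all of them; this is exactly why $\inv^o_f$ and $\summaries^o_f$ must be over-approximations (a sound under-approximation would not suffice), and it is why the lemma is phrased with $\inv^o_f$ rather than an exact invariant. I would need to state carefully that replacing a callee by its over-approximating summary can only \emph{add} transitions to $f$'s behaviour, so a ranking function for the summarised system is a fortiori one for the concrete system — the usual "more nondeterminism is harder to rank, so ranking the abstraction is stronger" argument. A secondary subtlety worth a sentence is the treatment of $\assertions_f$: conjoining assertions into the antecedent is only sound for the termination claim if assertion violations are accounted for elsewhere; I would note this assumption explicitly rather than prove it here. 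Everything else — the induction for invariant preservation, the well-foundedness argument — is routine and I would not grind through it.
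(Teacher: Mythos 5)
Your argument is correct and is essentially the standard soundness argument that the paper leaves implicit: the paper gives no formal proof of Lemma~\ref{prop:term1} at all, only a two-sentence remark after the statement, and your elaboration (antecedents discharged along any concrete run because $\inv^o_f$ and $\summaries^o_f$ over-approximate by Lemmas~\ref{lemma:callctxo} and~\ref{lemma:summaryo}, hence $\rrank_f$ forces a strictly decreasing chain in a well-founded domain) is exactly the intended reading. The one place where you diverge from the paper is the treatment of $\assertions_f$: you defer soundness of conjoining assertions to ``a separate safety analysis,'' whereas the paper's justification is self-contained and sharper --- an assertion-violating trace \emph{aborts}, hence terminates, so such traces need no ranking argument and it is sound to restrict the ranking obligation to traces on which assertions hold. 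That observation closes the gap you flag as a ``secondary subtlety'' without any external assumption, and is worth adopting; everything else in your write-up matches the paper's intent.
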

Assertions in this formula correspond to to \texttt{assert()}
statements in the code. They can be assumed to hold because
assertion-violating traces terminate.
Over-approximating forward information may lead to inclusion of
spurious non-terminating traces. For that reason, we might not find a
termination argument although the {\function} is terminating.  As
we essentially under-approximate the set of terminating {\function}s, we will 
not give false positives. 
Regarding the solving algorithm for this formula,
we refer to Sec.~\ref{sec:sa}.

\paragraph{Example.}
Let us consider function \texttt{h} in Fig.~\ref{fig:example1}.
Assume we have the invariant $0 {\leq} x {\leq} M \wedge 1 {\leq} y {\leq} M$. 
Thus, we have to solve
$$
\begin{array}{l}
\exists \rrank_h:
0 {\leq} x {\leq} M \wedge 1 {\leq} y {\leq} M \wedge 
x'{=}x{+}y \wedge x{<}10 \wedge y'{=}y \wedge\\
\qquad \qquad \true \wedge \true \Longrightarrow \rrank_h((x,y),(x',y'))
\end{array}
$$
When using a linear ranking function template $c_1\cdot x+c_2\cdot y$, 
we obtain as solution, for example, $\rrank_h = (-x{>}-x')$.

\smallskip
If there is no trace from {\function} entry to exit, then we can prove
non-termination, even when using over-appro\-xi\-mations:
\begin{lemma}[line~\ref{line:fw_invsum} of $\mathit{analyze}$%
]\label{prop:nonterm}
A {\function} $f$
in forward calling context $\callctx^o_f$,
and forward invariants $\inv^o_f$ never terminates if
its summary $\summary^o_f$ is $\false$.
\end{lemma}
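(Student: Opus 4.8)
The plan is to trace the implications backwards from the hypothesis $\summary^o_f = \false$ to a statement about executions of $f$. Recall (Definition~\ref{def:inv}, and the defining formula in Def.~\ref{prop:summaryo}) that a summary $\summary^o_f$ over-approximates the input/output behaviour of $f$: for every concrete execution starting from inputs $\vxin$, passing through internal states consistent with $\inv^o_f$, and producing outputs $\vxout$ via $\fout_f$, we have $\summary^o_f(\vxin,\vxout)$. Since by Lemma~\ref{lemma:summaryo} $\summary^o_f$ is genuinely over-approximating, $\summary^o_f \equiv \false$ means that \emph{no} such $(\vxin,\vxout)$ pair exists, i.e.\ there is no complete execution of $f$ from an initial state satisfying $\init_f$ through an $\inv^o_f$-state at which $\fout_f$ holds. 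In other words, control never reaches the exit of $f$ along any trace compatible with the (over-approximate) invariant, hence along any concrete trace.

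The key step is then to argue that ``no terminating trace exists'' is equivalent to ``$f$ never terminates'' in this program model. Here I would appeal to the program encoding: a concrete execution of $f$ that terminates is precisely one that starts in a state satisfying $\init_f(\vxin,\vx)$, follows finitely many $\trans_f$ steps, and ends in a state $\vxpp$ where $\fout_f(\vxpp,\vxout)$ holds for some $\vxout$. Every reachable internal state lies in $\inv^o_f$ because $\inv^o_f$ is an inductive over-approximation (the second conjunct of Def.~\ref{prop:summaryo}, together with the initiation clause). So if a terminating trace existed, its input/output pair would have to satisfy $\summary^o_f$, contradicting $\summary^o_f \equiv \false$. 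Therefore no terminating trace exists, i.e.\ every execution of $f$ either runs forever or ends in failure/abortion — which is exactly the intended reading of ``$f$ never terminates'' (consistent with the treatment of assertions in Lemma~\ref{prop:term1}, where assertion-violating traces are regarded as terminating-by-abort, and dually here they are excluded from $\summary^o_f$ only if $\fout_f$ cannot be reached).

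The main obstacle I anticipate is making the phrase ``never terminates'' precise enough to match the claim without over-claiming: one must be careful that $\summary^o_f = \false$ rules out \emph{normal} termination (reaching $\fout_f$) but says nothing directly about traces that get stuck at a failed assertion or an \texttt{assume} that is never satisfiable downstream. The clean way to handle this is to note that the encoding folds such behaviours either into non-reachability of $\fout_f$ or into the $\assertions_f$/$\assumptions_f$ bookkeeping already introduced, so that $\summary^o_f \equiv \false$ legitimately certifies that $f$ has no terminating run in the sense used throughout the paper. Beyond that caveat, the argument is a short chain of contrapositives over the definitions, with no real computation; the only genuine content is the soundness of the summary (Lemma~\ref{lemma:summaryo}, ultimately Thm.~\ref{thm:synthsound}), which we are entitled to assume.
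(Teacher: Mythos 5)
Your argument is correct and follows essentially the same route as the paper, which in fact offers no explicit proof for this lemma beyond the preceding one-line observation that a $\false$ summary means there is no trace from {\function} entry to exit; your chain of contrapositives through Lemma~\ref{lemma:summaryo} and Thm.~\ref{thm:synthsound} is exactly the intended justification, spelled out. Your caveat about assertion-violating or assume-blocked traces is a sensible point of care, but it does not alter the argument and is consistent with how the paper treats such traces elsewhere.
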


Termination information is then propagated in the (acyclic) call graph
($\mathit{join}$ in line~\ref{line:join2} in
Algorithm~\ref{alg:interprocuniversal}):
\begin{proposition}\label{prop:ipterm}
A {\function} is declared
\begin{compactenum}
\item[(1)] \emph{non-terminating} if it is
non-terminating by Lemma~\ref{prop:nonterm}.
\item[(2)] \emph{terminating} if
\begin{compactenum}
\item[(a)] all its {\function} calls $h_i$ that
are potentially reachable (i.e.\ with $\callctx^o_{h_i}\neq \false$)
are declared terminating, and
\item[(b)] $f$ itself is terminating according to Lemma~\ref{prop:term1};
\end{compactenum}
\item[(3)]  \emph{potentially
    non-terminating}, otherwise.
\end{compactenum}
\end{proposition}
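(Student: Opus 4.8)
The plan is to establish each of the three cases of Proposition~\ref{prop:ipterm} as a soundness statement: whenever the algorithm \emph{declares} an outcome, that outcome holds for every execution of~$f$ in any calling context over-approximated by~$\callctx^o_f$. I would proceed by induction on the (finite, acyclic) call graph, ordered so that callees are treated before callers; the base case is a {\function} with no {\function} calls.

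First I would handle case~(1). Here $f$ is declared non-terminating because Lemma~\ref{prop:nonterm} applies: $\summary^o_f = \false$. By Lemma~\ref{lemma:summaryo} the summary is over-approximating, so $\summary^o_f = \false$ forces that there is no triple $(\vxin,\vxpp,\vxout)$ with $\init_f(\vxin,\vx)$, $\inv^o_f(\vxpp)$ and $\fout_f(\vxpp,\vxout)$ — i.e.\ no complete trace from {\function} entry through a reachable internal state to a {\function} exit. Since $\inv^o_f$ over-approximates the reachable internal states (Lemma~\ref{lemma:summaryo} again), every genuine execution that reaches an exit state would witness such a triple; hence no execution terminates. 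Next, case~(2): by the induction hypothesis, every potentially reachable callee $h_i$ (those with $\callctx^o_{h_i} \neq \false$) indeed terminates in the calling context $\callctx^o_{h_i}$, which by Lemma~\ref{lemma:callctxo} over-approximates the actual calling contexts arising in $f$; callees with $\callctx^o_{h_i} = \false$ are genuinely unreachable (again by over-approximation of the calling context), so their havocked placeholder is never exercised on a real trace. Condition~(b) gives, via Lemma~\ref{prop:term1}, a ranking function $\rrank_f$ valid on $\inv^o_f \wedge \trans_f$ (modulo the sound use of $\summaries^o_f$ and $\assertions_f$, both justified in the surrounding text). Combining these: any infinite execution of $f$ must either loop within $f$'s own transition relation — contradicting the well-foundedness guaranteed by $\rrank_f$ on the over-approximate invariant — or diverge inside some reachable callee — contradicting the induction hypothesis. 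Finally case~(3) is the default and needs no proof: ``potentially non-terminating'' is vacuously sound since it asserts nothing definite.

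The main obstacle is the interplay between the havocking semantics of placeholder predicates and the soundness of $\rrank_f$: the ranking constraint in Lemma~\ref{prop:term1} is proved over $\trans_f$ in which {\function} calls are \emph{havocked} (placeholders $\equiv \true$) but strengthened by $\summaries^o_f$. I must argue that a terminating callee cannot inject an infinite sub-execution that the havocked, summary-strengthened $\trans_f$ fails to capture — precisely, that concatenating a real caller step with a real (terminating) callee execution yields a behaviour consistent with $\inv^o_f \wedge \trans_f \wedge \summaries^o_f$, so that $\rrank_f$'s strict decrease still applies at the level of $f$'s internal states while the induction hypothesis rules out divergence strictly inside the callee. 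This is where the over-approximation lemmas (Lemmas~\ref{lemma:callctxo},~\ref{lemma:summaryo}) and the assumption that the call graph is acyclic do the real work; the rest is bookkeeping over the three clauses.
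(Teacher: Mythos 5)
Your argument is sound and follows essentially the route the paper intends: the paper states Proposition~\ref{prop:ipterm} without proof (treating it as the specification of the $\mathit{join}$ rule) and defers the soundness argument to the theorem immediately following it, whose proof sketch is precisely your induction over the acyclic call graph combined with Lemmas~\ref{lemma:callctxo}, \ref{lemma:summaryo}, \ref{prop:term1} and~\ref{prop:nonterm}. Your identification of the havocked-placeholder/summary interaction as the crux is correct and goes beyond the level of detail the paper itself provides.
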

Our implementation is more efficient than
Algorithm~\ref{alg:interprocuniversal} because it avoids computing
a termination argument for $f$ if one of its callees is potentially
non-terminating.

\begin{theorem}
If the entry {\function} of a program is declared terminating,
then the program terminates universally.
If the entry {\function} of a program is declared non-terminating,
then the program never terminates.
\end{theorem}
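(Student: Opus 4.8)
The plan is to prove the two implications by structural induction on the acyclic call graph, processing {\function}s in reverse topological order, and to lift the local facts certified by Lemma~\ref{prop:term1} and Lemma~\ref{prop:nonterm} to statements about concrete whole-program executions using the over-approximation Lemmas~\ref{lemma:callctxo} and~\ref{lemma:summaryo}. For the first implication it is convenient to strengthen the hypothesis to: \emph{if $f$ is declared terminating in the sense of Proposition~\ref{prop:ipterm}, then $f$ terminates on every input admitted by its forward calling context $\callctx^o_f$}. Since $\analyzeforward$ is invoked on $f_{\mathit{entry}}$ with $\callctx^o_{f_{\mathit{entry}}}=\true$, instantiating this for the entry {\function} yields universal termination of the program.

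For the base case ($f$ has no {\function} calls), being declared terminating forces clause~(2b) of Proposition~\ref{prop:ipterm}, so by Lemma~\ref{prop:term1} there is a termination argument $\rrank_f$ with $\inv^o_f(\vx)\wedge\trans_f(\vx,\vxp)\wedge\ldots\Rightarrow\rrank_f(\vx,\vxp)$, certifying a ranking function into a well-founded domain. By Lemma~\ref{lemma:summaryo}, $\inv^o_f$ over-approximates the states reachable in $f$ from any input admitted by $\callctx^o_f$, so every transition actually taken satisfies the antecedent; hence the rank strictly decreases along every run and $f$, having no calls, terminates. For the inductive step, clause~(2a) gives that every callee $h_i$ with $\callctx^o_{h_i}\neq\false$ is declared terminating. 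By Lemma~\ref{lemma:callctxo} the $\callctx^o_{h_i}$ over-approximate the actual-parameter states reachable at call site $i$, so (i) any $h_i$ with $\callctx^o_{h_i}=\false$ is never executed, and (ii) for the remaining callees the induction hypothesis gives termination on all inputs admitted by $\callctx^o_{h_i}$, hence on every concrete invocation. Clause~(2b) plus Lemma~\ref{prop:term1} bounds the number of loop iterations of $f$ itself: the calls occur in $\trans_f$ only as the havocking placeholder predicate ($\true$), which is sound precisely because the forward information is over-approximating, so the ranking argument is valid irrespective of the values returned by the calls. Combining (ii) with this bound, any execution of $f$ from a $\callctx^o_f$-admitted input performs finitely many loop iterations, each invoking only finitely-terminating calls, and therefore terminates, closing the induction.

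For the second implication, being declared non-terminating forces clause~(1), i.e. $\summary^o_{f_{\mathit{entry}}}=\false$; applying Lemma~\ref{prop:nonterm} with $\callctx^o_{f_{\mathit{entry}}}=\true$ yields directly that $f_{\mathit{entry}}$, hence the program, never terminates on any input. Intuitively, $\summary^o_f$ over-approximates the input/output relation of $f$, so $\summary^o_f=\false$ states that no input is related to any output — no execution ever reaches $\fout_f$ — which Lemma~\ref{prop:nonterm} records as non-termination.

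The main obstacle is the inductive step of the first implication, where three things must be reconciled at once: that Lemma~\ref{prop:term1} certifies termination only of the \emph{loops} of $f$ under the assumption that all calls return (calls being abstracted by the havocking placeholder in $\trans_f$); that a call nested inside a loop must genuinely terminate for an iteration to complete, which is supplied by the induction hypothesis on the strictly smaller sub-call-graph; and that termination established relative to the over-approximations $\inv^o_f$, $\summary^o_f$, $\callctx^o_{h_i}$ must transfer to concrete runs, which rests entirely on Lemmas~\ref{lemma:callctxo} and~\ref{lemma:summaryo} (and ultimately on soundness of the synthesis backend, Thm.~\ref{thm:synthsound}). A secondary point to pin down is the exact reading of ``terminates'' in the symbolic transition-system semantics — in particular that $\summary^o_f=\false$ witnesses divergence rather than some other form of non-return — for which we appeal to the statement of Lemma~\ref{prop:nonterm}. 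Note that no soundness claim is needed for the ``potentially non-terminating'' verdict, since our analysis under-approximates the set of terminating {\function}s.
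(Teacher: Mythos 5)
Your proposal is correct and follows essentially the same route as the paper, which proves the theorem by induction over the acyclic call graph using Proposition~\ref{prop:ipterm}; you simply expand the one-line sketch, in particular by strengthening the induction hypothesis to termination relative to the forward calling context and by making explicit the appeal to Lemmas~\ref{lemma:callctxo}, \ref{lemma:summaryo}, \ref{prop:term1} and~\ref{prop:nonterm}. No gaps.
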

\begin{proof}
By induction over the acyclic call graph
using Prop.~\ref{prop:ipterm}.
\end{proof}

\subsection{Preconditions for Termination}\label{sec:precond}

Before introducing conditional termination, we have to talk about
preconditions for termination.

If a {\function} terminates conditionally like {\function} $h$ in
Fig.~\ref{fig:example1} $\comptermarg$ (Lemma~\ref{prop:term1})
will not be able to find a satisfying predicate $\rrank$.
However, we would like to know under which
preconditions, i.e. values of \texttt{y} in above example, the
{\function} terminates.

\begin{algorithm}[t]
\KwIn{{\function} $f$ with invariant $\inv$, additional termination
  conditions $\termconditions$
}
\KwOut{precondition $\precond$}
$(\precond,p) \gets (\false,\true)$\;
\KwLet $\varphi = \init(\vxin,\vx) \wedge \inv(\vx)$\;
\While{$\true$}{
    $\psi \gets p \wedge \neg \precond(\vxin) \wedge \varphi$\;
    solve  $\psi$ for $\vxin,\vx,\vxp$\label{line:bootstrap}\;
    \lIf{UNSAT}{\Return{$\precond$}}
    \Else{
        \KwLet $\vxinv$ be a model of $\psi$\;
        \KwLet $\inv' = \compinv(f,\vxin{=}\vxinv)$\label{line:invcand}\;
        \KwLet $\rranktempl = \comptermarg(f,\inv')$\label{line:termcand}\;
        \lIf{$\rranktempl = \true$}{
            $p \gets p \wedge (\vxin\neq \vxinv)$\label{line:blockcand}}
        \Else{
            \KwLet $\theta = \termconditions \wedge \rranktempl$\;
            \KwLet $\precond' = \neg\compprecond(f,\neg\theta)$\label{line:necprecond}\;
            $\precond \gets \precond \vee\precond'$\label{line:addprecond}\;
        }
    }
}
\caption{\label{alg:compPrecond}$\compprecondterm$}
\end{algorithm}

We can state this problem as defined in Def.~\ref{def:precond}.
In Algorithm~\ref{alg:compPrecond} we search for $\precond$, $\inv$, and
$\rrank$ in an interleaved manner.  Note that $\false$ is a trivial
solution for $\precond$; we thus have to aim at finding a good
under-approximation of the maximal solution (weakest precondition) for
$\precond$.

We bootstrap the process by assuming $\precond=\false$ and search for
values of $\vxin$ (Line~\ref{line:bootstrap}).
If such a value $\vxinv$ exists, we can compute an invariant under the
precondition candidate $\vxin=\vxinv$ (Line~\ref{line:invcand}) and 
use Lemma~\ref{prop:term1} to search for the corresponding termination
argument (Line~\ref{line:termcand}).

If we fail to find a termination argument ($\rranktempl=\true$), we block
the precondition candidate (Line~\ref{line:blockcand}) and restart the
bootstrapping process.
Otherwise, the algorithm returns a termination argument $\rranktempl$ that
is valid for the concrete value $\vxinv$ of $\vxin$.
Now we need to find a sufficiently weak $\precond$ for which
$\rranktempl$ guarantees termination.
To this end, we compute an over-approximating precondition for those
inputs for which we cannot guarantee termination ($\neg\theta$ in
Line~\ref{line:necprecond}, which includes additional termination
conditions coming from the backward calling context and preconditions
of {\function} calls, see Sec.~\ref{sec:condterm}).  The negation of
this precondition is an under-approximation of those inputs for which
$f$ terminates.
Finally, we add this negated precondition to our $\precond$
(Line~\ref{line:addprecond}) before we start over the bootstrapping
process to find precondition candidates outside the current precondition
($\neg \precond$) for which we might be able to guarantee termination.

\paragraph{Example.} 
Let us consider again function \texttt{h} in Fig.~\ref{fig:example1}.
This time, we will assume we have the invariant $0 \leq x \leq M$ (with $M:=2^{32}-1$).
We bootstrap by assuming $\precond=\false$ and searching for
values of $y$ satisfying
$\true \wedge \neg\false \wedge x{=}0 \wedge 0\leq x\leq M$. 
One possibility is $y=0$.
We then compute the invariant under the precondition $y=0$ and get
$x=0$. Obviously, we cannot find a termination argument in this case.
Hence, we start over and search for values of $y$ satisfying
$y\neq 0 \wedge \neg\false \wedge x{=}0 \wedge
0{\leq} x{\leq} 10$. This formula is for instance satisfied by $y=1$.
This time we get the invariant $0 {\leq} x {\leq} 10$ and the ranking
function $-x$. Thus, we have to solve

\smallskip
\centerline{
$
\begin{array}{l}
\exists \vec{\precondparam}:
\precondtempl(y, \vec{\precondparam}) \wedge 0{\leq} x{\leq} M \wedge x'{=}x{+}y \wedge x{<}10\\
\qquad \qquad \Rightarrow \neg (-x {>} -x')
\end{array}
$
}

\smallskip
to compute an over-approximating
precondition over the template $\precondtempl$.
In this case,
$\precondtempl(y, \precondparam)$ turns out to be $y = 0$, therefore its
negation $y \neq 0$ is the $\precond$ that we get.
Finally, we have to check for further precondition candidates,
but
$y\neq 0 \wedge \neg(y\neq 0) \wedge x{=}0 \wedge 0{\leq} x{\leq} M$
is obviously UNSAT. Hence, we return the sufficient precondition for
termination $y\neq 0$.

\subsection{Conditional Termination}\label{sec:condterm}

We now extend the formalisation to Algorithm~\ref{alg:interproc},
which additionally requires the computation of under-approximating
calling contexts and sufficient preconditions for termination
(procedure $\compprecondterm$, see Alg.~\ref{alg:compPrecond}).  

First, $\compprecondterm$ computes in line~\ref{line:invcand} an
over-approx\-ima\-ting invariant $\inv^{o}_{fp}$ entailed by the 
candidate precondition.
$\inv^{o}_{fp}$ is computed through Def.~\ref{prop:summaryo}
by conjoining the candidate precondition to the antecedent.
Then, line~\ref{line:termcand} computes the corresponding termination
argument $\rrank_f$ by applying Lemma~\ref{prop:term1} using  
$\inv^o_{fp}$ instead of $\inv^o_{f}$.
Since the termination argument is under-approximating, we are sure that $f$
terminates for this candidate precondition if $\rrank_f\neq\true$.

\rronly{
\begin{remark}
  The available under-approximate information
  $\callctx^u_f \wedge \summaries^u_f \wedge
  \preconditions^u_f$, where \\
$
\begin{array}{@{}rl}
\summaries^u_f =
\bigwedge_{\text{calls }h_j\text{ in }f} & g_{h_j} \Longrightarrow \\
&\summary^u_{h_j}(\vxargin_j,\vxargout_j)
\end{array}
$\\
$
\begin{array}{@{}l}
\text{and }\mathit{Preconds}^u_f =
\bigwedge_{\text{calls }h_j\text{ in }f} g_{h_j} \Rightarrow \precond^u_{h_j}(\vxargin_j)
\end{array}
$ 
could be conjoined with the antecedents in
  Prop.~\ref{prop:summaryo} and Prop.~\ref{prop:term1} in order to
  constrain the search space. However, this is neither necessary for soundness
  nor does it impair soundness, because the same information is used
  in Props.~\ref{prop:precondu} and~\ref{prop:callctxu}.
\end{remark}
}

Then, in line~\ref{line:necprecond} of $\compprecondterm$, we compute
under-approximating (sufficient) preconditions for traces satisfying
the termination argument $\rrank$ via over-approxima\-ting the traces
violating $\rrank$.

Now, we are left to
specify the formulae corresponding to the following functions:
\begin{compactitem}
\item $\compcallctx^u$ (Def.~\ref{prop:callctxu})
\item $\compprecond$ (Def.~\ref{prop:precondu})
\end{compactitem}
We use the superscript ${}^\overtilde{u}$ to indicate 
negations of under-approxi\-ma\-ting information.

\begin{definition}[Line~\ref{line:necprecond} of
  $\compprecondterm$]\label{prop:precondu}
A precondition for termination $\precond^u_f$
in backward calling context $\callctx^u_f$ and with forward
invariants $\inv^o_f$ is  
$\precond^u_f \eqdef \neg
\precond^\overtilde{u}_f$, i.e. the negation of a satisfiability witness $\precond^\overtilde{u}_f$ for: 
$$
\begin{array}{r@{}r@{}l}
\multicolumn{3}{l}{\exists \precond^\overtilde{u}_{h_i},\inv^\overtilde{u}_f,\summary^\overtilde{u}_f: \forall \vxin,\vx,\vxp,\vxpp,\vxout:} \\
\multicolumn{3}{l}{~~\neg\callctx^u_f(\vxin,\vxout) \wedge \inv^o_f(\vx) \wedge \rronly{\summaries^o_f \wedge}}\\
\multicolumn{3}{l}{~~~~\pponly{\summaries^o_f \wedge} \summaries^\overtilde{u}_f \wedge \rronly{\assumptions_f(\vx) \wedge} \assertions_f(\vx) \Longrightarrow} \\
&\big(& \init(\vxin,\vxpp) \wedge \inv^\overtilde{u}_f(\vxpp) \wedge \fout(\vx,\vxout) \\
&&\Longrightarrow \inv^\overtilde{u}_f(\vx) 
\wedge \summary^\overtilde{u}_f(\vxin,\vxout) \wedge \precond^\overtilde{u}_f(\vxin)\big)
\\
\pponly{
\end{array}
$$

$$
\begin{array}{r@{}r@{}l}
}
&~~~\wedge \big(& (\neg\rranktempl_f(\vx,\vxp) \vee \preconditions^\overtilde{u}_f)
\wedge \\
&&\inv^\overtilde{u}_f(\vxp) \wedge \trans(\vx,\vxp)  \Longrightarrow \inv^\overtilde{u}_f(\vx)\big)
\end{array}
$$
\noindent$
\begin{array}{@{}rl}
\text{with }\summaries^\overtilde{u}_f =
\bigwedge_{\text{calls }h_j\text{ in }f} & g_{h_j} \Longrightarrow \\
&\neg\summary^u[h](\vxargin_j,\vxargout_j)
\end{array}$

\noindent$
\begin{array}{@{}r@{}l}
\text{and }\mathit{Preconds}^\overtilde{u}_f = 
\bigvee_{\text{calls }h_j\text{ in }f} & g_{h_j} \Longrightarrow \\
& \neg\precond^u[h](\vxargin_j,\vxargout_j).
\end{array}
$
\end{definition}

This formula is similar to Def.~\ref{prop:summaryo}, but
w.r.t.\ backward calling contexts and summaries, and strengthened by
the (forward) invariants $\inv^o_f$.
We denote the negation of the witnesses found for the summary and the
invariant by
$\summary^u_f \eqdef \neg\summary^\overtilde{u}_f$ and 
$\inv^u_f \eqdef \neg\inv^\overtilde{u}_f$, respectively.

\begin{lemma} \label{lem:precondu}
$\precond^u_f$, $\summary^u_f$ and $\inv^u_f$ are under-approximating.
\end{lemma}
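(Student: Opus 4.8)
The plan is to reduce the claim to the soundness of the synthesis backend (Thm.~\ref{thm:synthsound}), exactly in the style of the proofs of Lemmas~\ref{lemma:callctxo} and~\ref{lemma:summaryo}, but now tracking the extra negations introduced by the ``over-approximate the complement'' trick. First I would recall what ``under-approximating'' means here: $\precond^u_f$ is under-approximating if every input satisfying it genuinely leads to termination, i.e.\ $\precond^u_f$ implies the weakest termination precondition; similarly $\summary^u_f$ under-approximates the true summary relation and $\inv^u_f$ under-approximates the true (reachable) invariant. Dually, a predicate $\varphi^{\overtilde u}$ is \emph{over}-approximating precisely when $\neg\varphi^{\overtilde u}$ is under-approximating. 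So it suffices to show that the witnesses $\precond^{\overtilde u}_f$, $\summary^{\overtilde u}_f$, $\inv^{\overtilde u}_f$ returned for the formula in Def.~\ref{prop:precondu} are over-approximating, and then take negations.

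The key steps, in order. (1) Observe that every antecedent hypothesis in the Def.~\ref{prop:precondu} formula is itself over-approximating: $\neg\callctx^u_f$ is over-approximating because $\callctx^u_f$ is under-approximating (inductive hypothesis on the call graph / the analogue of Lemma~\ref{lemma:callctxo}); $\inv^o_f$ is over-approximating by Lemma~\ref{lemma:summaryo}; $\summaries^o_f$ is over-approximating by Lemma~\ref{lemma:summaryo} applied at each callee; $\summaries^{\overtilde u}_f$ is a conjunction of guarded $\neg\summary^u[h]$, each over-approximating since $\summary^u[h]$ is under-approximating; and $\assertions_f$ may be assumed since assertion-violating traces terminate (as already argued after Lemma~\ref{prop:term1}). (2) Likewise, in the second conjunct the disjunct $(\neg\rranktempl_f \vee \preconditions^{\overtilde u}_f)$ is over-approximating: $\neg\rranktempl_f$ over-approximates the non-ranked transitions because $\rranktempl_f$ was synthesised to be a sound (under-approximating) ranking constraint in Line~\ref{line:termcand} via Lemma~\ref{prop:term1}, and $\preconditions^{\overtilde u}_f = \bigvee_j (g_{h_j}\!\Rightarrow\!\neg\precond^u[h])$ is over-approximating since each $\precond^u[h]$ is under-approximating by the induction hypothesis. (3) Now the formula in Def.~\ref{prop:precondu} has exactly the shape handled by the synthesis backend: an existential over the unknown predicates, with a body that is monotone in the sense required by Thm.~\ref{thm:synthsound} once all the ``parameters'' ($\neg\callctx^u_f$, $\inv^o_f$, $\summaries^o_f$, $\summaries^{\overtilde u}_f$, $\assertions_f$, $\neg\rranktempl_f$, $\preconditions^{\overtilde u}_f$) are over-approximating. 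Hence the synthesised witnesses $\precond^{\overtilde u}_f$, $\inv^{\overtilde u}_f$, $\summary^{\overtilde u}_f$ are over-approximating. (4) Negating, $\precond^u_f \eqdef \neg\precond^{\overtilde u}_f$, $\inv^u_f \eqdef \neg\inv^{\overtilde u}_f$ and $\summary^u_f \eqdef \neg\summary^{\overtilde u}_f$ are under-approximating, which is the claim. (5) Finally, close the induction: at the leaves of the acyclic call graph there are no callees, so $\summaries^{\overtilde u}_f$ and $\preconditions^{\overtilde u}_f$ are the empty conjunction/disjunction and the base case holds directly from Thm.~\ref{thm:synthsound}; the inductive step is exactly steps (1)–(4).

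The main obstacle I expect is step (1)–(2): getting the polarity bookkeeping right through the nested negations, in particular arguing that $\neg\theta = \neg(\termconditions \wedge \rranktempl_f)$ — which is what actually feeds Line~\ref{line:necprecond} — unfolds to precisely the $(\neg\rranktempl_f \vee \preconditions^{\overtilde u}_f)$ disjunct together with the $\neg\callctx^u_f$ conjunct in the formula, and that each piece of $\termconditions$ (backward calling context, callee preconditions) contributes an over-approximating hypothesis after negation. The other subtlety is making sure Thm.~\ref{thm:synthsound} applies \emph{with the bias toward weakest solutions} that we need: the backend must return a witness that is not merely \emph{a} solution but a small (strong) over-approximation, so that its negation is a large (weak) under-approximation — this is the ``biased witnesses'' feature flagged in the footnote in Sec.~\ref{sec:interproc}, and I would invoke it explicitly rather than just plain soundness.
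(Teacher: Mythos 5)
Your proposal is correct and follows essentially the same route as the paper's own proof sketch: over-approximate the negation of the precondition with respect to the negated under-approximating termination argument and the negated under-approximating backward information, appeal to the soundness of the synthesis (Thm.~\ref{thm:synthsound}) to conclude that the non-terminating traces are over-approximated, and negate; the paper likewise reads $\neg\rranktempl_f \vee \preconditions^{\overtilde{u}}_f$ as characterising the non-terminating states. Your added detail on the induction over the acyclic call graph and the polarity bookkeeping is a more explicit rendering of the same argument (the remark about needing \emph{biased} witnesses concerns precision rather than the soundness claimed by the lemma, so plain soundness of the synthesis already suffices).
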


\begin{proof}
We compute an over-approximation of the negation of the precondition
w.r.t.\ the negation of the under-approximating termination argument
and the negation of further under-approximating information (backward
calling context, preconditions of {\function} calls) --- by the
soundness of the synthesis (see Thm.~\ref{thm:synthsound} in
Sec.~\ref{sec:sa}), this over-approximates the non-terminating traces,
and hence under-approximates the terminating ones. Hence, the
precondition is a sufficient precondition for termination.
The term $\neg\rrank_f(\vx,\vxp) \vee
\mathit{Preconds}^\overtilde{u}_f$ characterises non-ter\-mi\-nating
states in the invariants of $f$: for these, either the termination
argument for $f$ is not satisfied or the precondition for termination
of one of the callees does not hold.
\end{proof}

Finally, we have to define how we compute the under-appro\-xi\-ma\-ting
calling contexts:
\begin{definition}[$\compcallctx^u$]\label{prop:callctxu}
  The backward calling context $\callctx^u_{h_i}$
  for {\function} call $h_i$ in {\function} $f$ in backward calling
  context $\callctx^u_f$ and forward invariants $\inv^o_f$ is $\callctx^u_{h_i} \eqdef \neg \callctx^\overtilde{u}_{h_i}$, the 
 negation of a satisfiability witnesses for: 
$$
\begin{array}{r@{}r@{}l}
\multicolumn{3}{l}{\exists \callctx^\overtilde{u}_{h_i},\inv^\overtilde{u}_f: \forall \vxin,\vx,\vxp,\vxargin_i,\vxargout_i,\vxout:} \\
\multicolumn{3}{l}{~~\neg\callctx^u_f(\vxin,\vxout) \wedge \inv^o_f(\vx) \wedge \rronly{\summaries^o_f \wedge}}\\
\multicolumn{3}{l}{~~ \pponly{\summaries^o_f \wedge} \summaries^\overtilde{u}_f \wedge \rronly{\assumptions_f(\vx) \wedge} \assertions_f(\vx) \Longrightarrow} \\
&\big(& \fout(\vx,\vxout) \Longrightarrow \inv^\overtilde{u}_f(\vx)\big) \\
\rronly{
\end{array}
$$
$$
\begin{array}{r@{}r@{}l}
}
&~~~\wedge \big(& \inv^\overtilde{u}_f(\vxp) \wedge \trans(\vx,\vxp) \\
&& \Longrightarrow \inv^\overtilde{u}_f(\vx) \wedge \callctx^\overtilde{u}_{h_i}(\vxargin,\vxargout)\big)
\end{array}
$$
\end{definition}

\begin{lemma}\label{lem:callctxu}
$\callctx^u_{h_i}$ is under-approximating.
\end{lemma}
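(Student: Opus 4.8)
The plan is to mirror the argument already used for Lemma~\ref{lem:precondu}, since the constraint in Def.~\ref{prop:callctxu} has the same shape as the one in Def.~\ref{prop:precondu} but drops the summary and precondition witnesses, keeping only the invariant witness $\inv^\overtilde{u}_f$ and the calling-context witness $\callctx^\overtilde{u}_{h_i}$. First I would check that every predicate appearing in the antecedent of the constraint is an over-approximation of the corresponding concrete set: $\neg\callctx^u_f$ over-approximates the complement of the backward calling context of $f$ because $\callctx^u_f$ is under-approximating (it is $\true$ at the program entry and is otherwise supplied inductively by the caller, cf.\ Alg.~\ref{alg:interproc}); $\inv^o_f$ is over-approximating by Lemma~\ref{lemma:summaryo}; $\summaries^\overtilde{u}_f$ over-approximates the negations of the callee summaries, because each $\summary^u[h]$ is under-approximating by Lemma~\ref{lem:precondu}; and the assertions $\assertions_f$ may be assumed, as assertion-violating traces are irrelevant for (non-)termination reasoning, exactly as in Lemma~\ref{prop:term1}.

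Given these premises, I would invoke the soundness of the synthesis backend (Thm.~\ref{thm:synthsound}): the witnesses $\callctx^\overtilde{u}_{h_i}$ and $\inv^\overtilde{u}_f$ it returns for the formula of Def.~\ref{prop:callctxu} over-approximate the least predicates satisfying its two implications. Semantically, the second implication propagates $\inv^\overtilde{u}_f$ backward along $\trans_f$ starting from the output states that violate the safe calling context $\callctx^u_f$ (reached via the first implication through $\fout$). Hence $\inv^\overtilde{u}_f$ over-approximates the set of internal states of $f$ lying on a trace that reaches a ``bad'' post-state, and simultaneously $\callctx^\overtilde{u}_{h_i}$ over-approximates all call-site parameter tuples $(\vxargin_i,\vxargout_i)$ occurring along such traces.

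Negating then yields the claim: $\callctx^u_{h_i} \eqdef \neg\callctx^\overtilde{u}_{h_i}$ is contained in the complement of that over-approximated set, hence it under-approximates the set of parameter tuples that arise only on traces staying within the region already known to be safe; constraining the callee $h_i$ with $\callctx^u_{h_i}$ therefore discards no caller behaviour on which the downstream precondition computation (Def.~\ref{prop:precondu}) depends. I expect the main obstacle to be making precise — and reconciling with the informal reading of the two implications — that negating an over-approximation of the set of states ``backward-reachable from the complement of $\callctx^u_f$'' produces a genuine under-approximation of the calling contexts that may soundly be assumed for $h_i$; in particular one must keep track of the greatest-fixpoint flavour of $\inv^\overtilde{u}_f$ and of the strengthening role played by the forward invariant $\inv^o_f$, which are precisely the places where the direction of approximation could be lost.
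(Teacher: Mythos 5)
Your proposal is correct and follows essentially the same route as the paper's own (very terse) proof sketch: observe that the antecedent of Def.~\ref{prop:callctxu} is built from the negations of under-approximating information (hence over-approximations of the corresponding negations), invoke Thm.~\ref{thm:synthsound} to conclude that $\callctx^\overtilde{u}_{h_i}$ over-approximates the negation of the calling context, and negate. The additional semantic elaboration you give is consistent with, and more detailed than, what the paper records.
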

\begin{proof}
The computation is based on the negation of the under-approximating
calling context of $f$ and the negated under-approximating summaries
for the function calls in $f$. By Thm.~\ref{thm:synthsound}, this
leads to an over-approximation of the negation of the
calling context for $h_i$.
\end{proof}

\begin{theorem}
A {\function} $f$ terminates for all values of $\vxin$ satisfying
$\precond^u_f$.
\end{theorem}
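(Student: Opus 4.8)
The plan is to show that $\precond^u_f$, as constructed by $\compprecondterm$ (Algorithm~\ref{alg:compPrecond}), satisfies the defining condition of a sufficient precondition for termination in Definition~\ref{def:precond}, and then to lift this to actual program termination by induction over the acyclic call graph. First I would observe that $\precond^u_f$ is returned as a finite disjunction $\bigvee_k \precond'_k$ of the per-iteration contributions added at Line~\ref{line:addprecond}, so it suffices to argue that each disjunct $\precond'_k$ is a sufficient termination precondition; the disjunction of sufficient preconditions is again sufficient, since for a given $\vxin$ we simply pick the witnessing iteration and its associated $\inv'$ and $\rranktempl$.

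For a fixed iteration, I would unfold the construction: Line~\ref{line:invcand} gives an over-approximating invariant $\inv^o_{fp}$ obtained from Definition~\ref{prop:summaryo} with the candidate precondition conjoined to the antecedent (so $\inv^o_{fp}$ is a genuine invariant whenever the input satisfies that candidate), Line~\ref{line:termcand} gives a ranking constraint $\rranktempl=\rrank_f$ valid on $\inv^o_{fp} \wedge \trans_f$ by Lemma~\ref{prop:term1}, and Line~\ref{line:necprecond} with Definition~\ref{prop:precondu} produces $\precond' = \neg\precond^\overtilde{u}_f$. By Lemma~\ref{lem:precondu}, $\precond'$ is under-approximating: $\precond^\overtilde{u}_f$ over-approximates the set of inputs reaching a non-terminating trace (those on which $\neg\rranktempl_f \vee \preconditions^\overtilde{u}_f$ holds somewhere along the run), hence its negation $\precond'$ is contained in the set of inputs on which $\rranktempl_f$ holds along every transition and all callee preconditions are met. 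On such inputs, $\inv^o_{fp}$ is an invariant, $\rrank_f$ strictly decreases on every transition inside it towards a well-founded domain, so the internal transition system of $f$ admits no infinite run; combined with acyclicity of the call graph and the callees' own termination (guaranteed because $\precond'$ entails $\preconditions^u[h]$ at each call site $h$, via the $\termconditions$ conjuncts folded into $\neg\theta$), every execution of $f$ from such a $\vxin$ terminates. This is exactly the statement of Definition~\ref{def:precond} for that disjunct.

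The main obstacle I anticipate is the interplay between the \emph{over}-approximating forward invariant $\inv^o_f$ that appears in the antecedent of Definition~\ref{prop:precondu} and the \emph{under}-approximating backward objects $\inv^\overtilde{u}_f, \summary^\overtilde{u}_f, \precond^\overtilde{u}_f$ being synthesised: one must carefully check that strengthening the synthesis query with $\inv^o_f$ (sound because $\inv^o_f$ truly holds on all reachable states, Lemma~\ref{lemma:summaryo}) does not spuriously shrink the set of non-terminating traces being over-approximated, and that the callee summaries enter with the correct polarity ($\summaries^\overtilde{u}_f$ uses $\neg\summary^u[h]$, $\preconditions^\overtilde{u}_f$ uses $\neg\precond^u[h]$). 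I would handle this by appealing to Theorem~\ref{thm:synthsound} (soundness of the synthesis backend) exactly as in the proofs of Lemmas~\ref{lem:precondu} and~\ref{lem:callctxu}, and to the induction hypothesis of the call-graph induction, which supplies that $\precond^u[h]$ is already a correct sufficient precondition for each callee $h$. The base case is a leaf {\function} with no calls, where $\termconditions = \callctx^u_f$ and the argument reduces to the single-procedure precondition soundness already illustrated in Section~\ref{sec:precond}; the inductive step threads the callee guarantees through the $\termconditions$ conjunction built in Lines~\ref{line:bw_cond1}--\ref{line:bw_cond2} of Algorithm~\ref{alg:interproc}.
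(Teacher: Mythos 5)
Your proposal is correct and follows essentially the same route as the paper, whose entire proof is the one-line ``by induction over the acyclic call graph using Lemmas~\ref{lem:precondu} and~\ref{lem:callctxu}''; you simply unfold that induction in detail, correctly identifying the per-disjunct structure of $\precond^u_f$, the role of Lemma~\ref{lem:precondu} and Theorem~\ref{thm:synthsound} in establishing under-approximation, and the threading of callee preconditions through $\termconditions$.
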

\begin{proof}
By induction over the acyclic call graph
using Lemmae~\ref{lem:precondu} and~\ref{lem:callctxu}.
\end{proof}

\subsection{Context-Sensitive Summaries}

The key idea of interprocedural analysis is to avoid re-analysing
{\function}s that are called multiple times.
For that reason, Algorithm~\ref{alg:interproc} first checks whether
it can re-use already computed information.
For that purpose, summaries are stored as implications $\callctx^o
\Rightarrow \summary^o$.
As the call graph is traversed, the possible calling contexts
$\callctx^o_{h_i}$ for a {\function} $h$ are collected over the call sites~$i$.
$\mathit{NeedToReAnalyze}^o$ (Line~\ref{line:needtoreanalze1} in
Alg.~\ref{alg:interproc}) checks whether the current
calling context $\callctx^o_{h_i}$ is subsumed by calling contexts
$\bigvee_i\callctx^o_{h_i}$ that we have already encountered, and if so,
$\summaries[h]$ is reused; otherwise it needs to be recomputed and
$\mathit{join}$ed conjunctively with previously inferred summaries.
The same considerations apply to invariants, termination arguments
and preconditions.
%

\section{Template-Based Static Analysis}\label{sec:sa}

In this section, we give a brief overview of our synthesis engine,
which serves as a backend for our approach 
(it solves the formulae in Definitions~\ref{prop:callctxo}, \ref{prop:summaryo}, 
\ref{prop:precondu}, and \ref{prop:callctxu} (see Sec.~\ref{sec:interproc})).

Our synthesis engine employs template-based static analysis
to compute ranking functions,
invariants, summaries, and calling contexts, i.e., implementations of
functions $\compinvsummary^o$
and $\compcallctx^o$ from the second-order constraints
defined in Sec.~\ref{sec:interproc}.
To be able to effectively solve second-order
problems, we reduce them to first-order by restricting the space of
solutions to expressions of the form $\templ(\vx,\vec{\invparam})$
where 
\begin{itemize}
  \item $\vec{\invparam}$ are parameters to be instantiated with
    concrete values and $\vx$ are the program variables.
  \item $\templ$ is a template that gives a blueprint for the shape
    of the formulas to be computed.
    Choosing a template is analogous to choosing an abstract domain in
    abstract interpretation.  To allow for a flexible choice, we consider
    \emph{template polyhedra} \cite{SSM05}.
\end{itemize}
We state give here a soundness result:
\begin{theorem}\label{thm:synthsound}
Any satisfiability witness $\vec{d}$ of the reduction of the second order constraint for invariants in Def.~\ref{def:inv} using template~$\templ$
  \[
  \begin{array}{rlr}
  \exists \vec{\invparam},\forall \vxin, \vx, \vxp: & \init(\vxin,\vx) \Longrightarrow \templ(\vx,\vec{\invparam}) 
  \\
  \wedge &\templ(\vx,\vec{\invparam})\wedge\trans(\vx,\vxp)\Longrightarrow \templ(\vxp,\vec{\invparam}) 
  \end{array}
  \]
satisfies $\forall \vx: \inv(\vx) \Longrightarrow \templ(\vx,\vec{\invparam})$,
i.e.\ $\templ(\vx,\vec{\invparam})$ is a sound over-approximating invariant.
Similar soundness results hold true  for summaries and calling contexts. 
\end{theorem}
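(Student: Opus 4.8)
The plan is to establish two facts about the witness $\vec d$ (the value of $\vec{\invparam}$): first, that $\templ(\cdot,\vec d)$ is an invariant in the sense of Definition~\ref{def:inv}, and second, that it over-approximates the concrete collecting semantics of the {\function}, i.e.\ the strongest invariant. The first fact is immediate by construction: the reduced constraint in the statement is obtained syntactically from the two implications of Definition~\ref{def:inv} by substituting the predicate $\inv := \templ(\cdot,\vec d)$, so any satisfying $\vec d$ makes $\templ(\cdot,\vec d)$ satisfy the initiation and consecution conditions verbatim. Nothing needs to be proved here beyond noting the substitution.

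For the over-approximation claim I would read $\inv$ on the right-hand side as the strongest invariant $\inv_{\min}$, characterised by $\inv_{\min}(\vx)$ holding iff there exist $\vxin$ and a finite $\trans$-chain $\vx_0,\dots,\vx_n=\vx$ with $\init(\vxin,\vx_0)$ and $\trans(\vx_k,\vx_{k+1})$ for all $k<n$. The goal $\forall\vx:\inv_{\min}(\vx)\Rightarrow\templ(\vx,\vec d)$ then follows by induction on the chain length $n$: the base case $n=0$ is exactly the first conjunct of the reduced constraint instantiated at $\vxin,\vx_0$, and the inductive step applies the second conjunct at $\vx_{n-1},\vx_n$ together with the induction hypothesis $\templ(\vx_{n-1},\vec d)$. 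This is the textbook ``an inductive invariant over-approximates the reachable states'' argument; the only care needed is to make precise that the concrete semantics really is this least fixpoint, which follows from the encoding conventions of Section~\ref{sec:prelim} (internal states are loop-head valuations, $\trans$ relates successive such valuations, control flow is tracked by the Boolean variables $g$, and placeholder predicates are already havocked).

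For summaries and calling contexts I would argue analogously. In each case the template-reduced constraint (given in Definitions~\ref{prop:summaryo} and~\ref{prop:callctxo}, and solved by the backend) is the corresponding implication of Definition~\ref{def:inv} with $\inv$ replaced by the over-approximating template invariant $\templ_{\inv}:=\templ(\cdot,\vec d)$ shown sound above, and with $\summary$ (resp.\ $\callctx$) replaced by its template; a witness is therefore a valid summary/calling context w.r.t.\ $\templ_{\inv}$ by definition. To conclude it over-approximates the concrete input/output (resp.\ actual-parameter) relation, observe that every concrete behaviour is witnessed by a reachable post-state $\vxp$, which by the first part satisfies $\templ_{\inv}(\vxp)$ and hence lies in the antecedent of the defining implication, forcing the computed $\summary$/$\callctx$ to hold of it. One monotonicity point must be checked here: $\inv$ occurs in the antecedent of these implications, so enlarging it only strengthens the constraint; thus a witness obtained w.r.t.\ the larger $\templ_{\inv}$ is a fortiori a witness w.r.t.\ the true invariant, which is what makes the two layers of over-approximation compose soundly.

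The main obstacle is not the induction, which is routine, but pinning down the semantic objects precisely: stating the concrete collecting semantics that the invariant, summary, and calling context are meant to over-approximate, and verifying that the $\init/\trans/\fout$ encoding of Section~\ref{sec:prelim} (with control-flow Booleans and havocked placeholders) genuinely realises that semantics, so that the syntactic-substitution step and the reachability induction refer to the same object. Once this correspondence is fixed, the remainder is a short structural induction together with the monotonicity observation above.
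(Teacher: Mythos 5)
Your proposal is correct and follows essentially the same route as the paper, which disposes of this theorem in one line by appealing to the soundness of abstract interpretation \cite{CC77}; your reachability induction (inductive invariants over-approximate the least fixpoint of $\init/\trans$), the substitution observation, and the monotonicity remark for summaries and calling contexts are precisely the details that citation elides. You also correctly resolve the one genuine ambiguity in the statement --- that $\inv$ in the conclusion must be read as the \emph{strongest} invariant, since the claim is false for an arbitrary invariant in the sense of Definition~\ref{def:inv}.
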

This ultimately follows from the soundness of abstract interpretation
\cite{CC77}.  Similar approaches have been described, for instance, by
\cite{GS07,GSV08,LAK+14}.  However, these methods consider programs
over mathematical integers.

Ranking functions require specialised synthesis techniques.
To achieve both expressiveness and efficiency, we generate linear lexicographic functions~\cite{BMS05,CSZ13}.
Our ranking-function synthesis approach is similar to the TAN tool~\cite{KSTW10} 
but extends the approach from monolithic to lexicographic ranking functions.
Further, unlike TAN, our synthesis engine is much more versatile and configurable,
e.g., it also produces summaries and invariants.

\pponly{Due to space limitations, w}\rronly{W}e refer to
\rronly{Appendix~\ref{sec:appendix}} \pponly{the extended version
  \cite{extended-version}}, which includes a detailed description of
the synthesis engine, our program encoding, encoding of bit-precise
arithmetic, and tailored second-order solving techniques for the
different constraints that occur in our analysis.  In the following
section, we discuss the implementation.

\section{Implementation}\label{sec:impl}

We have implemented the algorithm in \toolname~ \cite{2LSanon},
a static analysis tool for C programs built on the CPROVER
framework, using MiniSat~2.2.0 as back-end solver.  
Other SAT and SMT
solvers with incremental solving support would also be applicable.
Our approach enables us to use a single solver instance per
{\function} to solve a series of second-order queries as required by
Alg.~\ref{alg:interproc}.
This is essential as our synthesis algorithms make thousands of
solver calls.
Architectural settings (e.g.\ bitwidths) can be provided on the command line.

\pponly{Discussions about technical issues w.r.t. bit-preciseness and 
the computation of intraprocedural termination arguments 
can be found in the extended version \cite{extended-version}.}
\rronly{
\paragraph{Bitvector Width Extension}\label{sec:impl:extension}
As aforementioned, the semantics of C allows integers to wrap around when they
over/underflow.
Let us consider the following example, for which we want to find a
termination argument using Algorithm~\ref{alg:compTerm}:

\noindent{
\footnotesize\textbf{void} f() \{ \textbf{for}(\textbf{unsigned char} x; ; x++); \}}

The ranking function synthesis needs 
to compute a value for template parameter $\rankparam$ such that
$\rankparam(x{-}x')>0$ holds for all $x, x'$ under
transition relation $x'{=}x{+}1$ and computed invariant $\true$
(for details of the algorithm refer to \rronly{Appendix~\ref{sec:intraproc}} \pponly{the extended version \cite{extended-version}}).

\medskip
Thus, assuming that the current value for $\rankparam$ is $-1$, the
constraint to be solved (Algorithm~\ref{alg:compTerm} Line~\ref{line:solveterm})
is $\true \wedge x'{=}x{+}1 \wedge
\neg({-1}(x{-}x')){>}0)$, or $\neg({-1}(x{-}(x{+}1)){>}0)$, for short.
While for mathematical
 integers this is SAT, it is UNSAT for signed bit-vectors due to overflows.
For $x{=}127$, the overflow happens such that $x{+}1{=}{-}128$. Thus,
$127{-}({-}128){>} 0$ becomes $-1 {>} 0$, which makes the constraint
UNSAT, and we would incorrectly conclude that $-x$ is a ranking
function, which does not hold for signed bitvector semantics.  However,
if we extend the bitvector width to $k{=}9$ such that the arithmetic in
the template does not overflow, then
$\neg(-1\cdot((\mathit{signed}_9)127{-}(\mathit{signed}_9)({-}128))>0)$ evaluates
to $255>0$, where $\mathit{signed}_k$ is a cast to a $k$-bit signed
integer.
Now, $x{=}127$ is a witness showing that $-x$ is not a valid ranking function.

For similar reasons, we have to extend the bit-width of $k$-bit unsigned
integers in templates to $(k{+}1)$-bit signed integers to retain soundness.

\paragraph{Optimisations}
Our ranking function synthesis algorithm searches for coefficients
$\vec{\rankparam}$ such that a constraint is UNSAT.
However, this may result in enumerating all the values for
$\vec{\rankparam}$ in the range allowed by its type, which is
inefficient.
In many cases, a ranking function can be found for which
$\rankparam_j \in \{-1,0,1\}$.
In our implementation, we have embedded an improved algorithm 
(Algorithm~\ref{alg:compTerm} in Appendix~\ref{sec:intraproc}).
into an outer refinement loop which iteratively extends the range for
$\vec{\rankparam}$ if a ranking function could not be found.
We start with $\rankparam_j \in \{-1,0,1\}$, then we try
$\rankparam_j \in [-10,10]$ before extending it to the whole range.

\paragraph{Further Bounds}
%
As explained in Algorithm~\ref{alg:compTerm}, we bound the number of
lexicographic components (default 3), because otherwise
Algorithm~\ref{alg:compTerm} does not terminate if there is no number $n$
such that a lexicographic ranking function with $n$ components proves
termination.

Since the domains of $\vx,\vxp$ in Algorithm~\ref{alg:compTerm} and of
$\vxin$ in Algorithm~\ref{alg:compPrecond} might be large, we limit also
the number of iterations (default 20) of the \emph{while} loops in these
algorithms.
In the spirit of bounded model checking, these bounds only restrict
completeness, i.e., there might exist ranking functions or preconditions
which we could have found for larger bounds.
The bounds can be given on the command line.
}
\section{Experiments}\label{sec:exp}

We performed experiments to support the following claims:
\begin{compactenum}
\item Interprocedural termination analysis (IPTA) is faster than
  monolithic termination analysis (MTA).
\item The precision of IPTA is comparable to MTA.
\item {\toolname} outperforms existing termination analysis tools.
\item {\toolname}'s analysis is bit-precise.
\item {\toolname} computes usable preconditions for termination.
\end{compactenum}

We used the \emph{product line} benchmarks of the \cite{svbenchmarks} benchmark repository.
In contrast to other categories, this benchmark set contains programs
with non-trivial procedural structure.
This benchmark set contains 597 programs with 1100 to 5700 lines of code (2705 on average),%
\footnote{Measured using \texttt{cloc} 1.53.}
33 to 136 {\function}s (67 on average), and 4 to 10 loops (5.5 on average).
Of these bencharks, 264 terminate universally, whereas 333 never terminate.

The experiments were run on a
Xeon X5667 at 3\,GHz
running Fedora 20 with 64-bit binaries.
Memory and CPU time were restricted to 16\,GB and 1800 seconds per benchmark, respectively
(using \cite{Rou11}).  Using {\toolname} with interval templates was
sufficient to obtain reasonable precision.

\paragraph{Modular termination analysis is fast}
We compared IPTA with MTA (all {\function}s inlined).
Table~\ref{tab:results2} shows that IPTA times out on 2.3\,\% of the
benchmarks vs.~39.7\,\% for MTA. The geometric mean speed-up of IPTA
w.r.t.~MTA on the benchmarks correctly solved by both approaches is
1.37.

In order to investigate how the 30m timeout affects MTA, 
we randomly selected 10 benchmarks that timed out for 30\,m and re-ran
them: 1 finished in 32\,m, 3 after more than 1h, 6 did not finish
within 2\,h.

\paragraph{Modular termination analysis is precise}
Again, we compare IPTA with MTA.
Table~\ref{tab:results2} shows that IPTA proves 94\,\% of the
terminating benchmarks, whereas only 10\,\% were proven by MTA.
MTA can prove all never-terminating benchmarks including 13 benchmarks
where IPTA times out. MTA times out on the benchmarks that cause 13
additional \emph{potentially non-terminating} outcomes for
IPTA.

\begin{table}[t]
\centering
\begin{tabular}{l|r|rrr@{}l}
     &  \rot{{\toolname} IPTA} & \rot{{\toolname} MTA} & \rot{TAN} & \rot{Ultimate}&  \\
\hline
terminating         & 249 &  26 &  18 &  50& \\
non-terminating & 320 & 333 &   3 & 324&${}^*$ \\
potentially non-term.   &  14 &   1 & 425 &   0& \\
timed out   &  14 & 237 & 150 &  43& \\
errors      &   0 &   0 &   1 & 180& \\
\hline
total run time (h)         &  58.7 & 119.6 & 92.8 & 23.9& 
\end{tabular}~\\[2ex]
\caption{\label{tab:results2} Tool comparison (${}^*$ see text).}
\end{table}

\paragraph{{\toolname} outperforms existing termination analysis tools}
We compared {\toolname} with two termination tools for C programs
from the SV-COMP termination competition,
namely~\cite{tan2014}
and~\cite{Ultimate2015}.

Unfortunately, the tools~\cite{Aprove2014},
\cite{Loopus},
\cite{FuncTion2015},
\cite{HipTNT},
and~\cite{ARMC2011}
have limitations 
regarding the subset of C that they can handle that make them unable
to analyze any of the benchmarks out of the box.
We describe these limitations in \cite{experiments-log}.
Unfortunately, we did not succeed to generate the correct input files
in the intermediate formats required by \cite{MSt2} and
\cite{KiTTeL2015} using the recommended frontends \cite{SLAyer11} and
\cite{llvm2KiTTeL2015}.

TAN~\cite{KSTW10},
and KiTTeL/KoAT~\cite{FKS12} support bit-precise C semantics.
Ultimate uses mathematical integer reasoning but tries to ensure
conformance with bit-vector semantics.  Also, Ultimate uses a semantic
decomposition of the program~\cite{HHP14} to make its analysis
efficient.

Table~\ref{tab:results2} shows lists for each of the tools the 
number of instances solved, timed out or aborted because of an
internal error.
We also give the total run time, which shows that analysis times are
roughly halved by the modular/interprocedural approaches ({\toolname}
IPTA, Ultimate) in comparison with the monolithic approaches
({\toolname} MTA, TAN).
Ultimate spends less time on those benchmarks that it can prove
terminating, however, these are only 19\,\% of the terminating benchmarks
(vs.~94\,\% for {\toolname}).
If Ultimate could solve those 180 benchmarks on which it
fails due to unsupported features of C, we would expect its
performance to be comparable to {\toolname}.

Ultimate and {\toolname} have different capabilities regarding
non-termination.  {\toolname} can show that a program never terminates
for all inputs, whereas Ultimate can show that there exists a potentially
non-terminating execution.  To make the comparison fair, we counted
benchmarks flagged as potentially non-terminating by Ultimate,
but which are actually never-terminating, in the \emph{non-terminating}
category in Table~\ref{tab:results2} (marked ${}^*$).

\paragraph{{\toolname}'s analysis is bit-precise}
We compared {\toolname} with Loopus on a collection of 15 benchmarks
(\texttt{ABC\_ex01.c} to \texttt{ABC\_ex15.c}) taken from the Loopus
benchmark suite \cite{Loopus}.%
While they are short (between 7 and 41 LOC), the
main characteristic of these programs is the fact that they exhibit
different terminating behaviours for mathematical integers and
bit-vectors.
For illustration, \texttt{ABC\_ex08.c} shown Fig.~\ref{fig:example:loopus} 
terminates with mathematical integers, but not 
with machine integers if, for instance, \texttt{m} equals \texttt{INT MAX}. 
Next, we summarise the results of our experiments on these benchmarks
when considering machine integers:
\begin{itemize}
\item only 2 of the programs terminate (\texttt{ABC\_ex08.c} and \texttt{ABC\_ex11.c}), and are correctly 
identified by both {\toolname} and Loopus.
\item for the rest of 13 non-terminating programs, Loopus claims they terminate, 
whereas {\toolname} correctly classifies 9 as potentially non-terminating (including \texttt{ABC\_ex08.c} in Fig.~\ref{fig:example:loopus}) and times out for 4. 
\end{itemize}

\begin{figure}
\begin{center}
\footnotesize
\begin{lstlisting}[basicstyle=\footnotesize]
void ex15(int m, int n, int p, int q) {
  for (int i = n; i >= 1; i = i - 1)
    for (int j = 1; j <= m; j = j + 1)
      for (int k = i; k <= p; k = k + 1)
        for (int l = q; l <= j; l = l + 1)
	;
}
\end{lstlisting} 
\end{center}

\vspace{-.5cm}
\caption{
Example \texttt{ABC\_ex15.c} from the Loopus benchmarks.\label{fig:example:loopus}
}
\vspace{-.5cm}
\end{figure}

\paragraph{{\toolname} computes usable preconditions for termination}
This experiment was performed on benchmarks extracted from Debian
packages and the linear algebra library CLapack.

The quality of preconditions, i.e.~usability or ability to help the
developer to spot problems in the code, is difficult to quantify.
We give several examples where function terminate conditionally.  The
\texttt{abe} package of Debian contains a function, shown in
Fig.~\ref{fig:example:abe}, where increments of the iteration in a
loop are not constant but dynamically depend on the dimensions of an
image data structure.  Here {\toolname} infers the precondition $img
\rightarrow h > 0 \wedge img \rightarrow w > 0$.

\begin{figure}
\begin{center}
\footnotesize
\begin{lstlisting}[basicstyle=\footnotesize]
void createBack(struct SDL_Surface *back_surf)
{
  struct SDL_Rect pos;
  struct SDL_Surface *img = images[img_back]->image;
	
  for(int x=0; !(s>=(*back_surf)->h); s+=img->h) {
    for(int y=0; !(y>=(*back_surf)->w); y+=img->w) {
      pos.x = (signed short int)x;
      pos.y = (signed short int)y;
      SDL_UpperBlit(img, NULL, *back_surf, &pos);
      ...
} } }
\end{lstlisting}
\end{center}

\vspace{-.5cm}
\caption{
Example \texttt{createBack} from Debian package \texttt{abe}.\label{fig:example:abe}
}
\vspace{-.5cm}
\end{figure}

The example in Fig.~\ref{fig:example:busybox} is taken from the
benchmark \texttt{basename} in the \texttt{busybox}-category of SVCOMP
2015, which contains simplified versions of Debian packages.  The
termination of function \texttt{full\_write} depends on the
return value of its callee function \texttt{safe\_write}.
Here {\toolname} infers the calling context $cc>0$, i.e.\ the contract
for the function \texttt{safe\_write}, such that the termination of 
\texttt{full\_write} is guaranteed. Given a proof that 
\texttt{safe\_write} terminates and returns a strictly positive value
regardless of the arguments it is called with, we can conclude 
that \texttt{full\_write} terminates universally.

\begin{figure}
\begin{center}
\begin{lstlisting}[basicstyle=\footnotesize]
signed long int full_write(signed int fd, 
    const void *buf, unsigned long int len, 
    unsigned long int cc) {
  signed long int total = (signed long int)0;
  for( ; !(len == 0ul); 
      len = len - (unsigned long int)cc) {
    cc=safe_write(fd, buf, len);
    if(cc < 0l) {
      if(!(total == 0l))
        return total;
      return cc;
    }
    total = total + cc;
    buf = (const void *)((const char *)buf + cc);
} }
\end{lstlisting}
\end{center}
\vspace{-.5cm}
\caption{
Example from SVCOMP 2015 \texttt{busybox}.\label{fig:example:busybox}
}
\vspace{-.5cm}
\end{figure}

The program in Fig.~\ref{fig:example:clapack} is a code snippet taken from the summation {\function} \texttt{sasum} within
\cite{clapack}, the C version of the popular LAPACK linear algebra library.
The loop in {\function} \texttt{f} does not terminate if $incx=0$.
If $incx>0$ ($incx<0$) the termination argument is that $i$ increases (decreases).
Therefore, $incx \neq 0$ is a termination precondition for \texttt{f}.
\begin{figure}
\begin{center}
\begin{lstlisting}[basicstyle=\footnotesize{}]
int f(int *sx, int n, int incx) {
  int  nincx = n * incx;
  int stemp=0;
  for (int i=0; incx<0 ? i >= nincx: i<= nincx; 
       i+=incx) {
    stemp += sx[i-1];
  }
  return stemp;
}
\end{lstlisting}
\end{center}
\vspace{-.5cm}
\caption{
Non-unit increment from \texttt{CLapack}.\label{fig:example:clapack}
}
\end{figure}

\section{Limitations, Related Works and Future Directions}

Our approach makes significant progress towards
analysing real-world software, advancing the
state-of-the-art of termination analysis of large programs.
Conceptually, we decompose the analysis into a sequence of well-defined second-order
predicate logic formulae with existentially quantified predicates. In
addition to \cite{DBLP:conf/pldi/GrebenshchikovLPR12}, we consider
context-sensitive analysis, under-approximate backwards analysis,
and make the interaction with termination analysis explicit.
Notably, 
these seemingly tedious formulae are actually solved by our generic
template-based synthesis algorithm, making it an efficient alternative
to predicate abstraction.

An important aspect of our analysis is that it is bit-precise. 
As opposed to the synthesis of termination arguments for linear programs over
integers (rationals) \cite{CPR06b,LWY12,BG13,PR04a,HHLP13,BMS05,CSZ13}, 
this subclass of termination analyses is substantially less covered.
While \cite{KSTW10,CKRW10}
present methods based on a reduction to Presburger arithmetic, and a
template-matching approach for predefined classes of ranking functions
based on reduction to SAT- and QBF-solving,
\cite{DBLP:conf/esop/DavidKL15} only compute intraprocedural 
termination arguments.

There are still a number of limitations to be addressed,
all of which connect to open challenges subject
to active research.  While some are orthogonal (e.g., data structures,
strings, refinement) to our interprocedural analysis framework, 
others (recursion, necessary
preconditions) require extensions of it.  
In this section, we discuss related work, as well as, characteristics and limitations
of our analysis, 
and future directions (cost analysis and concurrency).

\paragraph{Dynamically allocated data structures}
We currently ignore heap-allocated data.  
This limitation could be lifted by
using specific abstract domains.  For illustration, let us
consider the following example traversing a singly-linked list.

\begin{lstlisting}[numbers=none]
 List x; while (x != NULL) { x = x->next; }
\end{lstlisting}

Deciding the termination of such a program requires knowledge about
the shape of the data structure pointed by~$x$, namely, the program
only terminates if the list is acyclic. Thus, we would require an
abstract domain capable of capturing such a property and also relate
the shape of the data structure to its length.  Similar to
\cite{CSZ13}, we could use \cite{DBLP:conf/popl/MagillTLT10} in order
to abstract heap-manipulating programs to arithmetic ones.
Another option is 
using an abstract interpretation
based on separation logic formulae which tracks the depths of pieces
of heaps similarly to \cite{DBLP:conf/cav/BerdineCDO06}.

\paragraph{Strings and arrays} 
Similar to dynamic\rronly{ally allocated} data structures, handling strings and
arrays requires specific abstract domains.  String abstractions
that reduce null-terminated strings to integers (indices, length, and
size) are usually sufficient in many practical cases; scenarios where
termination is dependent on the content of arrays are much harder and
would require quantified invariants \cite{DBLP:conf/tacas/McMillan08}.
Note that it is favorable to run a safety checker before the
termination checker.  The latter can assume that assertions 
for buffer overflow checks hold which strengthens invariants and makes
termination proofs easier.

\paragraph{Recursion}
We currently use downward fixed point iterations for computing calling
contexts and invariants that involve summaries (see
Remark~\ref{rem:ipfixpoint}).
This is cheap but gives only imprecise results in the presence of
recursion, which would impair the termination analysis.
We could handle recursions by detecting cycles in the call graph and
switching to an upward iteration scheme in such situations. Moreover,
an adaptation regarding the generation of the ranking function
templates is necessary.
An alternative approach would be to make use 
of the theoretic framework presented in 
\cite{PSW05} for verifying total correctness and liveness properties of while programs
with recursion.

\paragraph{Template refinement}
We currently use interval templates together with heuristics for
selecting the variables that should be taken into consideration.
This is often sufficient in practice, but it does not exploit
the full power of the machinery in place.
While counterexample-guided abstraction refinement (CEGAR) techniques 
are prevalent in predicate abstraction \cite{CGJ+00}, attempts to use them in 
abstract interpretation are rare \cite{RRT08}.
We consider our template-based abstract interpretation that
automatically synthesises abstract transformers more amenable to
refinement techniques than classical abstract interpretations where
abstract transformers are implemented manually.

\paragraph{Sufficient preconditions to termination}
Currently, we compute  
\emph{sufficient} preconditions, i.e. under-approximating preconditions to
termination via computing over-approxima\-ting preconditions to
potential non-termination.
The same concept is used by other works on conditional termination
\cite{CGL+08,BIK12}.
However, they consider only a single {\function} and do not leverage
their results to perform interprocedural analysis on large benchmarks
which adds, in particular, the additional challenge of propagating
under-approximating information up to the entry {\function}
(e.g.\ \cite{GIK13}).
Moreover, by contrast to Cook et al \cite{CGL+08} who use an heuristic
\textsc{Finite}-operator left unspecified for bootstrapping their
preconditions, our bootstrapping is systematic through constraint
solving.

We could compute \emph{necessary} preconditions by computing
over-approximating preconditions to potential termination (and
negating the result).  However, this requires a method for proving
that there exist non-terminating executions, which is a well-explored
topic.
While \cite{GHM+08}
dynamically enumerate lasso-shaped candidate paths for
counterexamples, and then statically prove their feasibility,
\cite{CCF+14} prove nontermination via reduction to safety
proving.  
In order to prove both termination and non-termination, \cite{HLNR10}
compose several program analyses (termination provers for multi-path
loops, non-termination provers for cycles, and 
safety provers).

\paragraph{Cost analysis}
A potential future application for our work is cost and resource analysis. 
Instances of this type of analyses are 
the worst case execution time (WCET) analysis \cite{WEE+08}, as well
as bound and amortised complexity analysis \cite{ADFG10,BEF+14,SZV14}.
The control flow refinement approach \cite{GJK09,CML13}
instruments a program with counters and uses progress invariants to
compute worst case or average case bounds.

\paragraph{Concurrency}
Our current analysis handles single-threaded C programs.
One way of extending the analysis to multi-threaded programs is using
the rely-guarantee technique which is proposed in \cite{Jon83}, and
explored in several works \cite{CPR07,GPR11,PR12} for termination
analysis.  In our setting, the predicates for environment assumptions
can be used in a similar way as invariants and summaries are used in
the analysis of sequential programs.

\section{Conclusions}\label{sec:concl}

While many termination provers mainly target small, hard programs,
the termination analysis of larger code bases has received little attention.
We present an algorithm for \emph{interprocedural termination
  analysis} for non-recursive programs.  To our knowledge, this is the
first paper that describes in full detail the entire machinery
necessary to perform such an analysis.
  Our approach relies on a bit-precise static analysis combining
  SMT solving, template polyhedra and lexicographic,
  linear ranking function templates.
  We provide an implementation of the approach in the static analysis
  tool \toolname, and demonstrate the
  applicability of the approach to programs with thousands of lines of
  code.

\bibliographystyle{ieeetr}
\bibliography{biblio}

\rronly{
\appendix{} \label{sec:appendix}

\subsection{Program Encoding}
%
We encode programs in a representation akin to single-static assignment (SSA).  
We provide a brief review, focusing on the modelling of loops
and {\function} calls.  We continue to use the program in
Fig.~\ref{fig:example1} as example.

In SSA, each assignment to a variable gives rise to a fresh symbol.  For
instance, the initialisation of variable \texttt{x} corresponds to symbol
$x_0$, and the incrementation by $y$ gives rise to symbol $x_2$.
For the return values, additional variables such as $r_f$ are introduced.
In addition, at control-flow join points, the values coming from different
branches get merged into a single $\phi$-variable.  For instance,
$x^{\phi}_1$ is either the initial value $x_0$ or the value of $x$ after
executing the loop, which is denoted as $x^{lb}_3$.  In the case of
branches, the choice is controlled using the condition of the branch.

In the case of loops, the choice between the value at the loop entry point
and the value after the execution of the loop body is made using a
non-deterministic Boolean symbol ${ls}_3$.  That ensures the
SSA remains acyclic and loops are over-approximated.  Moreover, it allows us
to further constrain $x^{lb}_3$ with loop invariants inferred by our
analyses.

In addition to data-flow variables, there are guard variables $g_i$, which
capture the branch conditions from conditionals and loops.  For instance,
the loop condition is $g_2$, and the conditional around the invocation of
$f$ is $g_3$.

To facilitate interprocedural analysis, our SSA contains a placeholder for
{\function} calls, which ensures that {\function} calls are initially
havocked.  It can be constrained using the summaries computed in the course
of the analysis (cf.~Sec.~\ref{sec:prelim}).

Regarding pointers, a may-alias analysis is performed during
translation to SSA form and case splits are introduced accordingly.

\begin{figure}
\begin{center}
	{
	\scalebox{.9}{
	\vspace{-.5cm}
	\scriptsize
		\begin{tabular}{|l|}\hline
		\begin{tabular}{ll}
			$g_{3}$ & $\ssadef z > 0$\\
                        $w_0$ & $\ssadef 0$\\
                        $f(z)$ & \\
			$w_{1}$ & $\ssadef r_h$  \\
			$w^{\phi}_2$ & $ \ssadef g_3 ? w_1 : w_0$\\
                        $r_f$ &  $\ssadef w^{\phi}_2$ \\
                        \end{tabular}
			\\ \hline
			\begin{tabular}{ll}
%

			$g_{0}$ & $\ssadef true$\\			
			$x_{0}$ & $\ssadef 0$\\
			$g_{1}$ & $\ssadef g_{0}$\\
			$x^{\phi}_{1}$ & $\ssadef ({ls}_{3} ? x^{lb}_{3} : x_{0})$\\
			$g_{2}$ & $\ssadef (x^{\phi}_{1} < 10u \wedge g_{1})$\\
			$x_{2}$ & $\ssadef x^{\phi}_{1} + y$\\
			$r_h$ & $\ssadef x^{\phi}_1$\\
                        \end{tabular}
			\\ \hline
		\end{tabular}
	}
	}
	\caption{
		SSA form for example in Fig.~\ref{fig:example1}. \label{fig:ssa:strchr}
	}
\end{center}
\vspace{-.25cm}
\end{figure}

\subsection{Computing Over-Approximating Abstractions}\label{inv}
%
To implement Algorithm~\ref{alg:interproc}, we need to compute
invariants, summaries, and calling contexts, i.e., implementations of
functions $\compinvsummary^o$
and $\compcallctx^o$.  As described in Sec.~\ref{sec:prelim},
invariants and calling contexts can be declaratively expressed in
second-order logic.
To be able to effectively solve such second-order
problems, we reduce them to first-order by restricting the space of
solutions to expressions of the form $\templ(\vx,\vec{\invparam})$
where $\vec{\invparam}$ are parameters to be instantiated with
concrete values and $\vx$ are the program variables.

\paragraph{Template Domain.}
An abstract value $\vec{\invparamv}$ represents the set of all $\vx$ that
satisfy the formula $\templ(\vx,\vec{\invparamv})$ (\emph{concretisation}).
We write $\bot$ for the abstract value denoting the empty set
$\templ(\vx,\bot) \equiv \false$, and $\top$ for the abstract
value denoting the whole domain of $\vx$: $\templ(\vx,\top)
\equiv \true$.

Choosing a template is analogous to choosing an abstract domain in
abstract interpretation.  To allow for a flexible choice, we consider
\emph{template polyhedra} \cite{SSM05}: $\templ =
(\mathbf{A}\vx\leq\vec{\invparam})$ where $\mathbf{A}$ is a matrix
with fixed coefficients.  Polynomial templates subsume intervals,
zones and octagons~\cite{mine06octagons}.  Intervals, for example,
give rise to two constraints per variable $x_i$: $x_i \leq
\invparam_{i1}$ and $-x_i \leq \invparam_{i2}$.  We call the
constraint generated by the $r^{\text{th}}$ row of matrix $\mathbf{A}$
the $r^{\textit{th}}$ \emph{row} of the template.

To encode the context of template constraints,
e.g., inside a loop or a conditional branch,
we use \emph{guarded templates}.
In a guarded template each row $r$ is of the form
$G_r \Longrightarrow T_r$ for the $r^{\text{th}}$ row $T_r$ of the
base template domain (e.g. template polyhedra).
The guards are uniquely defined by the guards
of the SSA variables appearing in $T_r$.
$G_r$ denotes the guard associated to the variables $x$ at the loop
head, and $G_r'$ the guard associated to the variables $x'$ at the end
of the loop body.
A guarded template in terms of the variables at the loop head is of the form:
$
\templ = \bigwedge_{r} G_r \Longrightarrow T_r
$
(respectively $
\templ' = \bigwedge_{r} G'_r \Longrightarrow T'_r
$ if expressed in terms of the variables at the end of the loop body).

\paragraph{Inferring abstractions.}
Fixing a template reduces the second-order search for an invariant to
the first-order search for template parameters:

\[
\begin{array}{rlr}
\exists \vec{\invparam}:
\forall \vxin, \vx, \vxp: & \init(\vxin,\vx) \Longrightarrow \templ(\vx,\vec{\invparam})
\\
\wedge &\templ(\vx,\vec{\invparam}) \wedge\trans(\vx,\vxp)\Longrightarrow \templ'(\vxp,\vec{\invparam}) \ .
\end{array}
\]
By substituting the symbolic parameter $\vec{\invparam}$
by a concrete value $\vec{\invparamv}$,
we see that $\templ(\vx,\vec{\invparamv})$
is an invariant if and only if the following formula is unsatisfiable:
\[
\begin{array}{rlr}
\exists \vxin, \vx, \vxp: & \init(\vxin,\vx) \wedge \neg \templ(\vx,\vec{\invparamv})
\\
\vee & \templ(\vx,\vec{\invparamv}) \wedge\trans(\vx,\vxp)\wedge \neg \templ'(\vxp,\vec{\invparamv}) \ .
\end{array}
\]
As these vectors represent upper bounds on expressions, the most
precise solution is the smallest vector in terms of point-wise
ordering.  We solve this optimisation problem by iteratively calling
an SMT solver. Similar approaches have been described, for instance,
by \cite{GS07,GSV08,LAK+14}.  However, these methods consider programs
over mathematical integers.

Computing overapproximations for calling contexts 
is similar to computing invariants or summaries. 
They only differ in the program variables appearing in the templates: 
$\vxargin,\vxargout$ for calling contexts,
$\vx,\vxp$ for invariants, and $\vxin,\vxout$ for summaries.

\subsection{Termination Analysis For One {\Function}} \label{sec:intraproc}
%
In this section we give details on the algorithm that we use to solve
the formula in Lemma~\ref{prop:term1} (see Sec.~\ref{sec:univterm}).

Monolithic ranking functions are complete, i.e., termination can
always be proven monolithically if a program terminates.  However, in
practice, combinations of linear ranking functions, e.g., linear
lexicographic functions~\cite{BMS05,CSZ13} are preferred, as
monolithic \emph{linear} ranking functions are not expressive enough,
and \emph{non-linear} theories are challenging for existing SMT
solvers, which handle the linear case much more efficiently.

\subsubsection{Lexicographic Ranking Functions}\label{sec:lexi}

\begin{definition}[Lexicographic ranking
  function] \label{def:lexicographic} A lexicographic ranking function
  $R$ for a transition relation $\trans(\vx,\vxp)$ is an
  $n$-tuple of expressions $(\rank_n, \rank_{n-1}, ..., \rank_1)$ such
  that
$$
\begin{array}{rlr}
\multicolumn{3}{l}{\exists \Delta > 0: \forall \vx,\vxp: \trans(\vx,\vxp) \wedge \exists i\in [1,n]:} \\
  & \rank_i(\vx) > 0 & \text{(Bounded)} \\
\wedge & \rank_i(\vx) - \rank_i(\vxp) > \Delta & \text{(Decreasing)} \\
\wedge & \forall j>i: \rank_j(\vx) - \rank_j(\vxp) \ge 0 & \qquad\text{(Unaffecting)}
\end{array}
$$
\end{definition}
Notice that this is a special case of
Definition~\ref{def:ranking_function}.  In particular, the existence
of $\Delta>0$ and the \emph{Bounded} condition guarantee that $>$ is a
well-founded relation.

Before we encode the requirements for lexicographic ranking function into
constraints, we need to adapt it in accordance with the bit-vector
semantics.  Since bit-vectors are bounded, it follows that the
\emph{Bounded} condition is trivially satisfied and therefore can be
omitted.  Moreover, bit-vectors are discrete, hence we can replace the
\emph{Decreasing} condition with $\rank_i(\vx)-\rank_i(\vxp)>0$.  The
following formula, $\lexrank$, holds if and only if $(\rank_n, \rank_{n-1},
..., \rank_1)$ is a lexicographic ranking function with $n$ components over
bit-vectors.
$$\begin{array}{r@{}r@{}l}
\lexrank^n(\vx,\vxp) = &
\bigvee_{i=1}^{n} & (\rank_i(\vx)-\rank_i(\vxp)>0 \; \wedge\\
 & & \bigwedge_{j=i+1}^{n}(\rank_j(\vx)-\rank_j(\vxp)\geq 0))
\end{array}$$
Assume we are given the transition relation $\trans(\vx,\vxp)$ of a
{\function} $f$.  The {\function} $f$ may be composed of several loops, and
each of the loops is associated with a guard $g$ that expresses the
reachability of the loop head (see Sec.~\ref{sec:sa}).
That is, suppose $f$ has $k$ loops, then the lexicographic ranking
function to prove termination of $f$ takes the form:
\begin{center}
$\rrank^{\vec{n}}(\vx,\vxp) = \bigwedge_{i=1}^{k} g_i(\vx) \Longrightarrow \lexrank^{n_i}_i(\vx,\vxp)$
\end{center}

\subsubsection{Synthesising Lexicographic Ranking Functions}\label{sec:complexi}

Ranking techniques for mathematical integers use e.g.\ Farkas' Lemma,
which is not applicable to bitvector operations. We use a synthesis
approach (like the TAN tool~\cite{KSTW10}) and extend it from
monolithic to lexicographic ranking functions.

We consider the class of lexicographic ranking functions generated by
the template where $\rank_i(\vx)$ is the product
$\vec{\rankparam}_i \vx$ with the row vector $\vec{\rankparam}_i$
of template parameters.
We denote the resulting constraints for loop~$i$ as
$\lexranktempl_i^{n_i}(\vx,\vxp,\rankparamvec_i^{n_i})$, where
$\rankparamvec_i^{n_i}$ is the vector
$(\vec{\rankparam}_i^1,\dots,\vec{\rankparam}_i^{n_i})$.
The constraints for the ranking functions of a whole {\function}
$\rranktempl(\vx,\vxp,\vec{\rankparamvec}^\vec{n})$, where
$\vec{\rankparamvec}^\vec{n}$ is the vector
$(\rankparamvec_1^{n_1},\dots,\rankparamvec_k^{n_k})$.

Putting all this together, we obtain the following reduction of ranking
function synthesis to a first-order quantifier elimination problem over
templates:
$$
\exists \vec{\rankparamvec}^\vec{n}: \forall \vx,\vxp:
\inv(\vx)\wedge\trans(\vx,\vxp) \Longrightarrow \rranktempl(\vx,\vxp,\vec{\rankparamvec}^\vec{n})
$$

To complete the lattice of ranking constraints
$\lexranktempl_i^{n_i})$, we
add the special value $\top$ to the domain of
$\rankparamvec_i^{n_i}$.
We define $\lexranktempl_i^{n_i}(\vx,\vxp,\top) \eqdef \true$
indicating that no ranking function has been found for the given
template (``don't know'').
We write $\bot$ for the equivalence class of bottom elements for which
$\lexranktempl_i^{n_i}(\vx,\vxp,\rankparamvec_i^{n_i})$
evaluates to $\false$, meaning that the ranking function has not yet
been computed. For example, $\vec{0}$ is a bottom element.
Note that this intuitively corresponds to the meaning of $\bot$ and
$\top$ as known from invariant inference by abstract interpretation
(see Sec.~\ref{sec:sa}).

\begin{algorithm}[ht]
\KwIn{{\function} $f$ with invariant $\inv$, 
  bound on number of lexicographic components $N$}
\KwOut{ranking constraint $\rranktempl$}
$\vec{n}\gets \vec{1}^k$; $\vec{\rankparamvecv}^\vec{n}\gets
\bot^k$; $\vec{M} \gets \emptyset^k$\label{line:init}\;
\KwLet $\varphi = \inv(\vx)\wedge\trans(\vx,\vxp)$\;
\While{$\true$}{
    \KwLet $\psi = \varphi \wedge \neg \rranktempl(\vx,\vxp,\vec{\rankparamvecv}^\vec{n})$\;
    solve $\psi$ for $\vx,\vxp$\label{line:solveterm}\;
    \lIf{UNSAT}{\Return $\rranktempl(\vx,\vxp,\vec{\rankparamvecv}^\vec{n})$}
    \KwLet $(\vxv,\vxpv)$ be a model of $\psi$\;
    \KwLet $i \in \{i\mid\neg (g_i \Rightarrow \lexranktempl_i^{n_i}(\vxv,\vxpv,\rankparamvecv^{n_i}_i))\}$\;
    $M_i \gets M_i \cup \{(\vxv,\vxpv)\}$\;
    \KwLet $\theta = \bigwedge_{(\vxv,\vxpv)\in M_i} \lexranktempl_i^{n_i}(\vxv,\vxpv,\rankparamvec^{n_i}_i)$\label{line:constrank}\;
    solve $\theta$ for $\rankparamvec_i^{n_i}$\label{line:solverank}\;
    \If{UNSAT}{
      \lIf{$n_i<N$}{
        $n_i \!\gets\! n_i\!+\!1$\label{line:increase};
        $\rankparamvecv_i^{n_i} \!=\! \bot$; $M_i \!=\! \emptyset$
     }
      \lElse{\Return $\rranktempl(\vx,\vxp,\top^k)$}
    }
    \Else{
        \KwLet $m$ be a model of $\theta$\;
        $\rankparamvecv^{n_i}_i \gets m$\label{line:update}\;
    }
}
\caption{\label{alg:compTerm}$\comptermarg$}
\end{algorithm}

We now use the example in Fig.~\ref{fig:example2} to walk through
Algorithm~\ref{alg:compTerm} that we use to solve
Lemma~\ref{prop:term1} (see Sec.~\ref{sec:univterm}).  The left-hand
side of Fig.~\ref{fig:example2} is the C~code and the right side is
its transition relation.  Since the {\function} only has a single
loop, we will omit the guard ($g = \true$).  Also, we assume that we
have obtained the invariant $\inv(x,y) = \true$.
We use Latin letters such as $\x$ to denote variables and Greek
letters such as $\xv$ to denote the values of these variables. 

In each iteration, our algorithm checks the validity of the current
ranking function candidate. If it is not yet a valid ranking function,
the SMT solver returns a counterexample transition.  Then, a new
ranking function candidate that satisfies all previously observed
counterexample transitions is computed.  This process is bound to
terminate because the finiteness of the state space.

\begin{figure}[t]
\vspace*{-3ex}
\parbox{0.2\textwidth}{
\footnotesize
\textbf{int} x=1, y=1; \\
\textbf{while}(x${>}$0) \{ \\
\hspace*{1em}\textbf{if}(y${<}$10) x=nondet(); \\
\hspace*{1em}\textbf{else} x-{-}; \\
\hspace*{1em}\textbf{if}(y${<}$100) y++; \\
\}
}
\parbox{0.3\textwidth}{
\footnotesize
$\begin{array}{r@{}l@{\,}l@{\,}c@{\,}l@{}l}
\multicolumn{5}{@{}l}{\trans((x,y),(x',y')) =}\\
x{>}0 \Rightarrow & \big( & (y{\geq}10 &\Rightarrow& x'{=}x{-}1)\\
& \wedge & (y{<}100 &\Rightarrow& y'{=}y{+}1)\\
& \wedge & (y{\geq}100 &\Rightarrow& y'=y) & \big) \wedge \\
x{\leq}0 \Rightarrow & \big( & \; x'=x &\wedge& y'=y & \big)
\end{array}$
}
\caption{\label{fig:example2}
Example for Alg.~\ref{alg:compTerm} (with simplified $\trans$).
}
\end{figure}

We start from the bottom element (Line~\ref{line:init}) for ranking
functions with a single component (Line~\ref{line:init}) and solve the
corresponding formula $\psi$, which is $\true \wedge
\trans((x,y),(x',$ $y')) \wedge \neg \false$
(Line~\ref{line:solveterm}). $\psi$ is satisfiable with the model
$(1,100,0,100)$ for $(x,y,x',y')$, for instance.
This model entails the constraint $(1\rankparam_x^1 + 100\rankparam_y^1) -
(0\rankparam_x^1 + 100\rankparam_y^1)>0$, i.e.~$\rankparam_x^1 > 0$, in
Line~\ref{line:constrank}, from which we compute values for the template
coefficients $\rankparam_x$ and $\rankparam_y$.
This formula is given to the solver (Line~\ref{line:solverank}) which
reports SAT with the model $(1,0)$ for $(\rankparam_x,\rankparam_y)$,
for example.
We use this model to update the vector of template parameter values
$\rankparamvecv^1_1$ to $(1,0)$ (Line~\ref{line:update}), which
corresponds to the ranking function~$x$.

We then continue with the next loop iteration, and check the current
ranking function (Line~\ref{line:solveterm}).  The formula
$\true \wedge \trans \wedge \neg (x-x'>0)$ is satisfiable by the
model $(1,1,1001,2)$ for $(x,y,x',y')$, for instance.
This model entails the constraint $(\rankparam_x^1 + \rankparam_y^1) -
(1001\rankparam_x^1 + 2\rankparam_y^1)>0$, i.e. 
$1000\rankparam_x^1-2\rankparam_y^1 > 0$) in Line~\ref{line:constrank},
which is conjoined with the constraint $\rankparam_x^1 > 0$ from the
previous iteration.  The solver (Line~\ref{line:solverank}) will tell us
that this is UNSAT.

Since we could not find a ranking function we add another component to the
lexicographic ranking function template (Line~\ref{line:increase}), and try
to solve again (Line~\ref{line:solveterm}) and obtain the model
$(1,99,0,100)$ for $(x,y,x',y')$, for instance.
Then in Line~\ref{line:solverank}, the solver might report that the model
$(0,-1,1,0)$ for
$(\rankparam_x^2,\rankparam_y^2,\rankparam_x^1,\rankparam_y^1)$ satisfies
the constraints.  We use this model to update the ranking function to
$(-y,x)$.

Finally, we check whether there is another witness for $(-y,x)$ not
being a ranking function (Line~\ref{line:solveterm}), but this time
the solver reports the formula to be UNSAT and the algorithm
terminates, returning the ranking function $(-y,x)$.

}

\end{document}